\DeclareMathOperator{\R}{\mathbb{R}}
\DeclareMathOperator{\N}{\mathbb{N}}
\DeclareMathOperator{\noise}{\mathcal{W}}
\DeclareMathOperator{\om}{\omega}
\DeclareMathOperator{\pd}{\partial}
\DeclareMathOperator{\diag}{diag}
\theoremstyle{plain} 
\newtheorem{thm}{Theorem}[section]
\newtheorem{prop}[thm]{Proposition}
\newtheorem{alg}[thm]{Algorithm}
\theoremstyle{definition} 
\theoremstyle{remark} 
\newtheorem{rem}{Remark}
\newcommand{\proper}{\mathsf}
\newcommand{\pP}{\proper{P}}
\newcommand{\pE}{\proper{E}}
\newcommand{\pC}{\proper{C}}
\newcommand{\pN}{\proper{N}}
\newcommand{\mv}[1]{{\boldsymbol{\mathrm{#1}}}}
\newcommand{\trsp}{\ensuremath{\top}}
\newcommand{\md}{\ensuremath{\,\mathrm{d}}}
\newcommand{\scal}[2]{\left\langle {#1},\,{#2} \right\rangle}
\begin{document}

\begin{frontmatter}

\title{Spatial Mat\'{e}rn fields driven by non-Gaussian noise}
\runtitle{Spatial Mat\'{e}rn fields driven by non-Gaussian noise}

\author{\fnms{David} \snm{Bolin}\corref{}\ead[label=e1]{bolin@maths.lth.se}}
\address{Mathematical Statistics\\
Centre for Mathematical Sciences\\
Lund University\\
Sweden\\\printead{e1}}
\affiliation{Lund University}
\runauthor{David Bolin}

\begin{abstract}
The article studies non-Gaussian extensions of a recently discovered link between certain Gaussian random fields, expressed as solutions to stochastic partial differential equations (SPDEs), and Gaussian Markov random fields. The focus is on non-Gaussian random fields with Mat\'ern covariance functions, and in particular we show how the SPDE formulation of a Laplace moving average model can be used to obtain an efficient simulation method as well as an accurate parameter estimation technique for the model. This should be seen as a demonstration of how these techniques can be used, and generalizations to more general SPDEs are readily available. 
\end{abstract}

\begin{keyword}[class=AMS]
\kwd{62M40}
\kwd{62H11}
\kwd{60H15}
\end{keyword}

\begin{keyword}
\kwd{Mat\'{e}rn covariances}
\kwd{SPDE}
\kwd{Laplace moving averages}
\kwd{Markov random fields}
\kwd{process convolutions}
\kwd{EM algorithm}
\end{keyword}

\end{frontmatter}

\section{Introduction}
Recently, \cite{lindgren10} derived a link between certain Gaussian fields, that can be represented as solutions to stochastic partial differential equations (SPDEs), and Gaussian Markov random fields (GMRFs). The main idea is to approximate these Gaussian fields using basis expansions $\sum_i w_i\varphi_i(s)$ where the stochastic weights $\{w_i\}$ are calculated using the stochastic weak formulation of the corresponding SPDE. For certain choices of the basis functions $\{\varphi_i\}$, especially compactly supported functions, the weights  form GMRFs. Because of the Markov property of the weights, fast numerical techniques for sparse matrices can be used when estimating parameters and doing spatial prediction in these models. This greatly improves the applicability to problems involving large data sets, where traditional methods in statistics fail due to computational issues. However, the advantages of representing Gaussian fields as solutions to SPDEs are not only computational.  Using the SPDE representation, non-stationary extensions are easily obtained by allowing spatially varying parameters in the SPDE \citep{lindgren10}, and the model class can be generalized to include more general covariance structures by generalizing the class of generating SPDEs \citep{bolin09b}. These are indeed useful features from an applied point of view as many applications require complicated non-stationary models to accurately capture the covariance structure of the data. 

So far these methods have only been used in Gaussian settings, and it has not been clear whether they are applicable when the Gaussianity assumption cannot be justified. Therefore, this work will focus on extending the SPDE methods beyond Gaussianity. A new type of non-Gaussian models that has proved to be useful in practical applications is the Laplace moving average models \citep{aberg08,aberg08b}. These are processes obtained by convolving some deterministic kernel function with stochastic Laplace noise. 
The models share many good properties with the Gaussian models while allowing for heavier tails and asymmetry in the data, making them interesting alternatives in practical applications \citep[see e.g.][]{bogsjo12}. One of the motivating examples in \citet{aberg08b} is a Laplace moving average model with Mat\'{e}rn covariances. This model can be seen as the solution to the same SPDE that generates Gaussian Mat\'{e}rn field but where the Gaussian white noise forcing is replaced with Laplace noise. It has previously been shown that the SPDE model formulation of Gaussian Mat\'{e}rn fields has many computational advantages compared with the process convolution formulation \citep{bolin09,simpson10}. We demonstrate here that for the Laplace moving average models, the SPDE formulation can also be used to derive a new likelihood-based parameter estimation technique as well as an efficient simulation procedure.

The structure of the paper is as follows. Section \ref{paperD:sec:matern} contains an introduction to the Mat\'{e}rn covariance family and the SPDE formulation in the Gaussian case. In Section \ref{paperD:sec:laplace}, stochastic Laplace fields are introduced, and some properties of the Laplace-driven SPDE model are derived. Subsequently, in Section \ref{paperD:sec:hilbert}, the Markov approximation technique by \cite{lindgren10} is extended to the Laplace model, and its sampling is discussed in Section \ref{paperD:sec:sampel}. A parameter estimation technique based on the EM algorithm is derived in Section \ref{paperD:sec:estimation}, and Section \ref{paperD:sec:study} contains a simulation study showing that it gives reliable parameter estimates. Finally, Section \ref{paperD:sec:discussion} contains a summary and discussion of future work and possible extensions.

\section{Gaussian Mat\'{e}rn fields}\label{paperD:sec:matern}
The Mat\'{e}rn covariance family \citep{matern60} is often used when modeling spatial data. There are a few different parameterizations of the Mat\'{e}rn covariance function in the literature, and the one most suitable in our context is
\begin{equation}\label{paperD:eq:matern}
C(\mv{h}) = \frac{2^{1-\nu}\phi^2}{(4\pi)^{\frac{d}{2}}\Gamma(\nu + \frac{d}{2})\kappa^{2\nu}}(\kappa\|\mv{h}\|)^{\nu}K_{\nu}(\kappa\|\mv{h}\|), \quad \mv{h} \in \R^d,
\end{equation}
where $d$ is the dimension of the domain, $\nu$ is a shape parameter, $\kappa^2$ a scale parameter, $\phi^2$ a variance parameter, and $K_{\nu}$ is a modified Bessel function of the second kind of order $\nu>0$. The associated spectrum is
\begin{equation}\label{paperD:matern_spec}
S(\mv{k}) = \frac{\phi^2}{(2\pi)^d}\frac{1}{(\kappa^2+\mv{k}^{\trsp}\mv{k})^{\nu + \frac{d}{2}}}.
\end{equation}
As the properties of Gaussian fields are given by their first two moments, the standard way of specifying Gaussian Mat\'{e}rn fields is to chose the mean value, $\mu(\mv{s})$ possibly spatially varying, and then let the covariance function be of the form \eqref{paperD:eq:matern}. An alternative way of specifying a Gaussian field on $\R^d$ is to view it as a process convolution
\begin{equation}\label{paperD:eq:gaussianconv}
X(\mv{s}) = \int_{\R^d} k(\mv{s},\mv{u}) \mathcal{B}(\md \mv{u}),
\end{equation}
where $k$ is some deterministic kernel function and $\mathcal{B}$ is a Brownian sheet \citep{higdon01}. One of the advantages with this construction is that non-stationary extensions are easily constructed by allowing the convolution kernel to be dependent on the location $\mv{s}$. If, however, the process is stationary, the kernel $k$ depends only on $\mv{s}-\mv{u}$ and the covariance function for $X$ is
\begin{equation*}
C(\mv{h}) = \int_{\R^d}k(\mv{u}-\mv{h})k(\mv{u})\md\mv{u}.
\end{equation*}
Thus, the covariance function $C$, the spectrum $S$, and the kernel $k$ are related through 
\begin{equation*}
(2\pi)^{d}|\mathcal{F}(k)|^2 = \mathcal{F}(C) = S,
\end{equation*}
where $\mathcal{F}(\cdot)$ denotes the Fourier transform. Since the spectral density for a Mat\'{e}rn field in dimension $d$ with parameters $\nu$, $\phi^2$, and $\kappa$ is given by \eqref{paperD:matern_spec}, one finds that the corresponding symmetric non-negative kernel is a Mat\'{e}rn covariance function with parameters $\nu_k = \frac{\nu}{2}-\frac{d}{4}$, $\phi_k = \sqrt{\phi}$, and $\kappa_k = \kappa$.

In yet another setting, Gaussian Mat\'{e}rn fields can be viewed as the solution to the SPDE
\begin{equation}\label{paperD:spde}
(\kappa^2-\Delta)^{\frac{\alpha}{2}}X(\mv{s}) = \phi \noise(\mv{s}),
\end{equation}
where $\noise(\mv{s})$ is Gaussian white noise, $\Delta = \sum_{i=1}^d \frac{\pd^2}{\pd\mv{s}_i^2}$ is the Laplace operator, and $\alpha = \nu + d/2$ \citep{whittle63}. As discussed in \cite{lindgren10}, there is an implicit assumption of appropriate boundary conditions needed if one wants the solutions to be stationary Mat\'{e}rn fields. 

The connection between \eqref{paperD:eq:gaussianconv} and \eqref{paperD:spde} is through the Green's function of the differential operator in \eqref{paperD:spde}
\begin{equation}\label{paperD:eq:green}
G_{\alpha}(\mv{s},\mv{t}) = \frac{2^{1-\frac{\alpha - d}{2}}}{(4\pi)^{\frac{d}{2}}\Gamma(\frac{\alpha}{2})\kappa^{\alpha - d}}(\kappa\|\mv{s}-\mv{t}\|)^{\frac{\alpha -d}{2}}K_{\frac{\alpha-d}{2}}(\kappa\|\mv{s}-\mv{t}\|),
\end{equation}
that serves as a kernel in \eqref{paperD:eq:gaussianconv}. It is straightforward to show that $G_{\alpha}\in L_p(\R^d)$ if and only if $\alpha > \frac{(p-1)d}{p}$ (see for example \cite{samko92} p.~538), and in particular $\alpha>d/2$ guarantees that $G_{\alpha}\in L_2(\R^d)$. 

A non-Gaussian model with Mat\'{e}rn covariances could be constructed either using the process convolution formulation \eqref{paperD:eq:gaussianconv} where the Brownian sheet is replaced by some non-Gaussian process, or through the SPDE formulation \eqref{paperD:spde} with non-Gaussian noise. Such non-Gaussian extensions are discussed next.

\section{Non-Gaussian SPDE-based models}\label{paperD:sec:laplace}
A simple way of moving beyond Gaussianity in the SPDE model \eqref{paperD:spde} is to allow for a stochastic variance parameter $\phi$. By choosing $\phi$ as an inverse-gamma distributed random variable, the resulting field has t-distributed marginal distributions and is therefore sometimes referred to as a t-distributed random field \citep{roislien06}. In a Bayesian setting, this extension can be interpreted simply as choosing a certain prior distribution for the variance, and one can of course come up with many other non-Gaussian models by changing this distribution. However, models constructed in this way are non-Gaussian only in a very limited sense. Namely, every realization of them behaves exactly as a Gaussian field with a globally re-scaled variance, and because of this, they are all non-ergodic as the parameters in the prior distribution cannot be estimated from a single realization of the field. One would prefer a non-Gaussian model where the actual sample paths behave differently from a stationary Gaussian field, and one way of achieving this is to let the variance parameter be spatially and stochastically varying. Both \cite{lindgren10} and \cite{bolin09b} explores this option by expressing $\log\phi(\mv{s})$ as a regression on a few known basis functions where the stochastic weights are estimated from data. This was interpreted as a non-stationary Gaussian model, but could also be viewed as a, somewhat limited, non-Gaussian model with a slowly spatially varying variance parameter $\phi(\mv{s})$. To obtain a model which is intrinsically non-Gaussian also within realizations, one can draw $\phi(\mv{s})$ at random independently for each $\mv{s}$. The right-hand side of \eqref{paperD:spde} is then a product of two independent noise fields. 
The following non-Gaussian models essentially can be interpreted as a formal realization of this idea.

One interesting type of distributions, obtained by taking a random variance and mean in an otherwise Gaussian random variable, are the generalized asymmetric Laplace distributions \citep{aberg08}. The Laplace distribution is defined through the characteristic function with parameters $\mu,\gamma \in \R$ and $\sigma,\tau>0$
\begin{equation*}
\varphi(u) = e^{i\gamma u}\left(1-i\mu u + \frac{\sigma^2}{2}u^2\right)^{-\tau}.
\end{equation*}
The distribution is symmetric if $\mu=0$ and asymmetric otherwise. The shape of the distribution is governed by $\tau$ and the scale by $\sigma$. The distribution is infinitely divisible, and a useful characterization is that if $Z$ is a standard normal variable and $\Gamma$ is an independent gamma variable with shape $\tau$, then $\gamma + \mu\Gamma + \sigma\sqrt{\Gamma}Z$ has an asymmetric Laplace distribution. 

Stochastic Laplace noise can now be obtained from an independently scattered random measure $\Lambda$, defined for a Borel set $B$ in $\R^d$ by the characteristic function
\begin{equation*}
\varphi_{\Lambda(B)}(u) = e^{i\gamma m(B) u}\left(1-i\mu u + \frac{\sigma^2}{2}u^2\right)^{- m(B)},
\end{equation*}
where the measure $m$ is referred to as the control measure of $\Lambda$. This does not define Laplace noise in a direct manner, but similarly to how Gaussian white noise can be seen as a differentiated Brownian sheet \citep{walsh84}, Laplace noise can be viewed in the sense of distributions (generalized functions) as a differentiated Laplace field. The most transparent characterization is through the following series representation of the Laplace field $\Lambda(\mv{s})$ on a compact set ${D\in \R^d}$:
\begin{equation}\label{paperD:eq:lfieldseries}
\Lambda(\mv{s}) = \gamma \mv{s} + \sum_{k=1}^{\infty}\left(\Gamma_k + G_k\sqrt{\Gamma_k}\right)\mv{1}(\mv{s}\geq\mv{s}_k), \,\,\,\mv{s}\in D,
\end{equation}
where $G_k$ are iid $\pN(0,1)$ random variables, $\mv{s}_k$ are iid uniform random variables on $D$, and 
\begin{equation*}
\mv{1}(\mv{s}\geq \mv{s}_k) = \begin{cases}1 & \mbox{if $s_i\geq s_{k,i}$ for all $i \leq d$},\\
0 & \mbox{otherwise.}\end{cases}
\end{equation*}
The random variables $\Gamma_k$ can be written as $\Gamma_k = e^{-\nu\gamma_k}W_k$ where $W_k$ are iid standard exponential variables and $\gamma_k$ are the arrival times of a Poisson process with intensity 1.
Thus, Laplace noise can be expressed as a distribution (generalized function) 
\begin{equation}\label{paperD:eq:lnoiseseries}
\dot{\Lambda} = \gamma + \sum_{k=1}^{\infty}\left(\Gamma_k + G_k\sqrt{\Gamma_k}\right)\delta_{\mv{s}_k},
\end{equation}
where $\delta_{\mv{s}_k}$ is the Dirac delta distribution centered at $\mv{s}_k$. 

The model of interest is the solution $X$ to the Laplace-driven SPDE
\begin{equation}\label{paperD:eq:lspde}
(\kappa^2 - \Delta)^{\frac{\alpha}{2}}X = \dot{\Lambda},
\end{equation}
where both $X$ and $\dot{\Lambda}$ are viewed as random variables valued in the space of tempered distributions. To clarify in what way the solution to this equation exists, we look at a general SPDE
\begin{equation}\label{paperD:eq:lspdep}
(\kappa^2 - \Delta)^{\frac{\alpha}{2}}X = \dot{M},
\end{equation}
where $M$ is an arbitrary independently scattered $L_2$-valued random measure with $\pE(|M(\md \mv{x})|^2) = C\md \mv{x}$ for some constant $C<\infty$. Examples of such measures are the Laplace measures of interest here but also standard Brownian sheets. As usual for fractional Laplacian operators \citep{samko92}, ${\mathcal{T} = (\kappa^2 - \Delta)^{\frac{\alpha}{2}}}$ is defined using the Fourier transform through $\mathcal{F}(\mathcal{T}f) = \mathcal{P}\hat{f}$, where $\hat{f}$ is the Fourier transform of the function $f$, ${(\mathcal{P}\hat{f})(\mv{k}) = (\kappa^2+ \mv{k}^{\trsp}\mv{k})^{\frac{\alpha}{2}}\hat{f}(\mv{k})}$, and the operator $\mathcal{T}$ is well-defined for example for all $f \in L_p(\R^d)$ for ${1\leq p\leq \infty}$. The definition applies also when $f$ is a distribution or, more specifically, a tempered distribution. Thus, \eqref{paperD:eq:lspdep} is viewed as an equation for two random (tempered) distributions so the equation has to be interpreted in the weak sense
\begin{equation}\label{paperD:eq:weak1}
\mathcal{T}X(\varphi) = \dot{M}(\varphi),
\end{equation}
where $\varphi$ is in some appropriate space of test functions. Now, the action of the self-adjoint operator $\mathcal{T}$ can be moved to the test function on the left-hand side and \eqref{paperD:eq:weak1} can be rewritten in a more explicit fashion as 
\begin{equation}\label{paperD:eq:p1}
X(\mathcal{T}\varphi,\om) = \int \varphi(\mv{s}) M(\md\mv{s},\om).
\end{equation}
Here, we have included the second argument $\om\in\Omega$ to highlight that the sought functional $X$ is random, and the equation should hold for $\omega$ in a certain full probability set $\Omega_0\in\Omega$ and universally for each $\varphi$.

To describe the solutions of \eqref{paperD:eq:lspdep}, we need the Sobolev spaces $H_n$ of fractional order $n$. These are usually defined using the Fourier transform in the following way. Let $E$ be the Schwartz space of rapidly decreasing functions on $\R^d$, for $u\in E'$ (the dual of $E$, also referred to as the space of tempered distributions), define the Fourier transform of $u$ as $\hat{u}(\varphi) = u(\hat{\varphi})$, where $\hat{\varphi}$ is the usual Fourier transform on $\R^d$ of $\varphi\in E$. Define a norm on $E$ by
\begin{equation*}
\|u\|_n = \int_{\R^d} (1+|\mv{k}|^2)^n |\hat{u}(\mv{k})|^2 \md\mv{k}
\end{equation*}
and let $H_n$ be the completion of $E$ in this norm. By Plancherel's theorem, one has that $H_0 = L_2(\R^d)$ and one can show that for the special case $n\in\N$, $H_n$ is identical to the classical Sobolev space of $L_2$ functions with all partial derivatives of order $n$ or less in $L_2$. The space $H_{-n}$ is the dual space of $H_{n}$ and does in general contain distributions.

Let us note that the right hand side of \eqref{paperD:eq:p1} in principle may not be defined on a full probability set uniformly for all $\varphi$. However, one can regularize $M$ so that $\varphi \rightarrow M(\varphi)$ is in fact a random distribution. Indeed, since 
\begin{equation*}
\pE(|M(\varphi)|^2) = C\int\varphi(\mv{s})^2\md\mv{s} = C\|\varphi\|_0^2,
\end{equation*}
the random linear functional $\varphi \rightarrow M(\varphi)$ is continuous in probability on $H_n$ for any $n\geq0$, and by Theorem 4.1 in \cite{walsh84} there exists a version of $M$ which is almost surely in $H_{-n}$ for $n>d/2$. From now on we always assume that we deal with such a version.

Following \cite{walsh84}, we say that $X(\cdot,\om)$ is an $H_n$-solution of \eqref{paperD:eq:lspdep} if for a.e. $\om$, $X(\cdot,\om)$ is an element of $H_{-n}$ and \eqref{paperD:eq:p1} holds for every $\varphi\in H_n$. In other words, we aim at finding a random functional $X$ that almost surely is a distribution and satisfies \eqref{paperD:eq:lspdep} as a continuous functional on $H_n$. The proof of the following proposition is similar to the proof of Proposition 9.1 in \cite{walsh84} where the existence of the solution to the stochastic Poisson equation on a bounded domain in $\R^d$ was demonstrated.

\begin{prop}\label{paperD:prop1}
Assume that $M$ is an independently scattered $L_2$-valued random measure with $\pE(|M(\md \mv{x})|^2) = C\md \mv{x}$. Then for $\kappa > 0$, $\alpha > 0$, there exists a random functional $X: H_n \times \Omega \rightarrow \R$ such that for a certain set $\Omega_0$, $P(\Omega_0)=1$ and for all $\omega \in \Omega_0$ and all $\varphi\in H_n$ 
\begin{equation}\label{paperD:eq:solution}
X(\varphi,\om) = \int G^{\alpha}\varphi(\mv{x})M(\md \mv{x},\om),
\end{equation}
where $G^\alpha\varphi(\mv{x})=\int G_\alpha(\mv{s},\mv{x}) \varphi(\mv{s})\md\mv{s}$ and $G_{\alpha}$ is given by \eqref{paperD:eq:green}. This is the unique $H_n$-solution to \eqref{paperD:eq:lspdep} if $n>d/2$, and moreover we have  $X\in H_m$ almost surely for $m<\alpha-d/2$.
\end{prop}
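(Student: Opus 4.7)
The plan is to construct $X$ explicitly via the stochastic integral in~\eqref{paperD:eq:solution}, use the $L_2$-isometry of this integral to turn the Fourier-multiplier bounds on $G^\alpha$ into continuity-in-probability estimates for $\varphi\mapsto X(\varphi)$, and then invoke the same Walsh-type regularization already used in the excerpt for $M$ to upgrade these estimates to almost sure membership in a Sobolev space. The weak SPDE identity and uniqueness then follow from the fact that $G^\alpha$ and $\mathcal{T}=(\kappa^2-\Delta)^{\alpha/2}$ are inverse Fourier multipliers.

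Concretely, define $X(\varphi,\om) := \int G^\alpha\varphi(\mv{x})\, M(\md\mv{x},\om)$ and note that $\widehat{G^\alpha\varphi}(\mv{k}) = (\kappa^2+|\mv{k}|^2)^{-\alpha/2}\hat\varphi(\mv{k})$, so $\alpha>0$ yields $\|G^\alpha\varphi\|_0 \le \kappa^{-\alpha}\|\varphi\|_n$ for any $\varphi\in H_n$. The $L_2$-isometry of the stochastic integral then gives $\pE|X(\varphi)|^2 \le C\|\varphi\|_n^2$, so $\varphi\mapsto X(\varphi)$ is continuous in probability on $H_n$, and the Walsh-type argument (Theorem~4.1 of \cite{walsh84}) produces a version of $X$ lying almost surely in $H_{-n}$ for every $n>d/2$. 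Because the Fourier symbols of $G^\alpha$ and $\mathcal{T}$ are reciprocals one has $G^\alpha(\mathcal{T}\varphi)=\varphi$, so substituting in the definition gives $X(\mathcal{T}\varphi,\om) = \int \varphi(\mv{x}) M(\md\mv{x},\om) = M(\varphi,\om)$, which is precisely \eqref{paperD:eq:p1}.

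For uniqueness, observe that $\mathcal{T}:H_n\to H_{n-\alpha}$ is a bijection with bounded inverse $G^\alpha$ and $H_n\subset H_{n-\alpha}$ for $\alpha>0$, so every $\psi\in H_n$ can be written as $\psi=\mathcal{T}\varphi$ with $\varphi=G^\alpha\psi\in H_{n+\alpha}\subset H_n$; if $X_1,X_2$ are two $H_n$-solutions, $(X_1-X_2)(\mathcal{T}\varphi)=0$ for every $\varphi\in H_n$ therefore forces $(X_1-X_2)(\psi)=0$ for every $\psi\in H_n$, giving $X_1=X_2$ in $H_{-n}$. For the sharper regularity, the pointwise bound $(\kappa^2+|\mv{k}|^2)^{-\alpha} \le C(1+|\mv{k}|^2)^{-\alpha}$ upgrades the previous estimate to $\pE|X(\varphi)|^2 \le C\|\varphi\|_{-\alpha}^2$, so $X$ is continuous in probability on the larger space $H_{-\alpha}$. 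Re-running the Walsh regularization with this finer bound produces an almost sure version in $H_{\alpha-n}$ for every $n>d/2$, and intersecting over a countable sequence $n_k\downarrow d/2$ yields $X\in H_m$ a.s.\ for all $m<\alpha-d/2$.

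The main obstacle is this final Walsh-type regularization step: converting a continuity-in-probability bound on a random linear functional into an almost sure version in a concrete Sobolev space costs exactly $d/2$ orders of derivatives, and correctly bookkeeping that loss --- in the spirit of Walsh's Proposition~9.1 for the stochastic Poisson equation on a bounded domain, but now with the fractional operator $\mathcal{T}=(\kappa^2-\Delta)^{\alpha/2}$ on $\R^d$ --- is what pins down the sharp threshold $m<\alpha-d/2$.
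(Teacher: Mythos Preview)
Your approach is essentially the same as the paper's: construct $X$ by the explicit stochastic integral, use the $L_2$-isometry together with the Fourier symbol of $G^\alpha$ to get continuity in probability on $H_{-\alpha}$, and invoke Walsh's Theorem~4.1 to obtain an almost sure version in $H_m$ for $m<\alpha-d/2$; uniqueness comes from the fact that $G^\alpha$ and $\mathcal{T}$ are inverse isomorphisms between Sobolev scales. The paper organizes things in a slightly different order (uniqueness first, then existence and regularity in one shot from the sharp bound $\pE|X(\varphi)|^2\le C_2\|\varphi\|_{-\alpha}^2$ rather than via your two-stage weak/sharp estimate), but the substance is the same.

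There is, however, one genuine gap in your write-up. When you say ``substituting in the definition gives $X(\mathcal{T}\varphi,\omega)=M(\varphi,\omega)$, which is precisely \eqref{paperD:eq:p1}'', this only establishes the identity in $L_2(\Omega)$ for each fixed $\varphi$, hence almost surely on a $\varphi$-dependent set $\Omega_\varphi$. The definition of an $H_n$-solution requires a single full-probability set $\Omega_0$ on which the identity holds simultaneously for \emph{every} $\varphi\in H_n$. The paper handles this explicitly: take a countable dense set $B\subset H_n$, intersect the sets $\Omega_{b_i}$ to obtain $\bar\Omega_0$ of full measure, and then use that both $\varphi\mapsto X(\mathcal{T}\varphi,\omega)$ and $\varphi\mapsto M(\varphi,\omega)$ are (after regularization) continuous functionals on $H_n$ for $\omega$ in a full-probability set, so equality on $B$ extends to all of $H_n$. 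Without this step you have not actually verified that your $X$ is an $H_n$-solution in the sense defined just before the proposition.
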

\begin{proof}
From the standard theory of fractional differential equations, one has that $G^{\alpha}$ maps $H_n$ isomorphically onto $H_{n+\alpha}$ \cite[see e.g.][p.547]{samko92}. Let $X$ be any $H_n$-solution to \eqref{paperD:eq:lspdep} and let $\psi = G^{\alpha}\varphi$. Applying \eqref{paperD:eq:p1} to $\psi$ and using that $\mathcal{T}G^{\alpha}\varphi = \varphi$ one gets that 
\begin{equation*}
X(\varphi) = X(\mathcal{T}G^{\alpha}\varphi) = X(\mathcal{T}\psi) = \int \psi(\mv{y}) M(\md \mv{y}) = \int G^{\alpha}\varphi(\mv{y}) M(\md \mv{y}).
\end{equation*}
Thus this solution also satisfies \eqref{paperD:eq:solution} and the solution is unique if it exists. 

To prove existence, let $X$ be defined by \eqref{paperD:eq:solution} and take $\varphi\in L_2(\R^d)$. Then
\begin{align*}
\pE(|X(\varphi)|^2) &= \pE\left[\left(\int G^{\alpha}\varphi(\mv{y})M(\md \mv{y})\right)^2\right] \\ &= \int G^{\alpha}\varphi(\mv{x})G^{\alpha}\varphi(\mv{y})\pE\left[M(\md \mv{x})M(\md \mv{y})\right]\\
&= C\int \left(G^{\alpha}\varphi(\mv{x})\right)^2\md \mv{x} = C\|G^{\alpha}\varphi\|_0^2 \leq  C_2 \|\varphi\|_{-\alpha}^2,
\end{align*}
where the last inequality follows from that $G^{\alpha}$ maps $H_{n-\alpha} \rightarrow H_{n}$ boundedly for $\alpha>0$. Thus, it follows that $X$ is a random linear functional that is continuous in probability on $H_{-\alpha}$. The embedding maps $H_{n_1} \rightarrow H_{n_2}$ are of Hilbert-Schmidt type if $n_1>n_2+d/2$ \citep[see e.g.~Example 1a in][]{walsh84}, and using this with $n_2=-\alpha$ together with Theorem 4.1 in \cite{walsh84} one gets that there exists a version of $X$ which is almost surely in $H_{-n}$ if $n>d/2>d/2-\alpha$. From now on, $X$ is such a version and we note that $X\in H_m$ almost surely for $m<\alpha-d/2$.

What is left to show now is that $X$ with probability one satisfies \eqref{paperD:eq:p1} for each $\varphi\in H_n$ for $n>d/2$. To that end, first note that if $\varphi\in H_n$, then by the definitions of $\mathcal{T}$ and $G^{\alpha}$ one has
\begin{align*}
G^{\alpha}\mathcal{T}\varphi(\mv{s}) &= \int G_{\alpha}(\mv{y},\mv{s})\mathcal{T}\varphi(\mv{y})\md\mv{y} \\ &= \mathcal{F}^{-1}\left((\kappa^2+ \mv{k}^{\trsp}\mv{k})^{-\frac{\alpha}{2}}(\kappa^2+ \mv{k}^{\trsp}\mv{k})^{\frac{\alpha}{2}}\hat{\varphi}(\mv{k})\right)(\mv{s}) \\ &= \varphi(\mv{s}).
\end{align*}
Let $n>d/2$ and fix $\varphi\in H_n$. If $M(\varphi,\om)$ denotes the functional $\int \varphi(\mv{s}) M(\md\mv{s},\omega)$, one has by the definition of $X$ and by the equation above that 
\begin{equation*}
\int |X(\mathcal{T}\varphi,\om)- M(\varphi,\om)|^2 \md\pP(\om)=0.
\end{equation*}
Hence, there is a set $\Omega_{\varphi}\subset\Omega$ with $\pP(\Omega_{\varphi})=1$ such that for each $\omega \in \Omega_{\varphi}$ one has $X(\mathcal{T}\varphi,\omega) =  M(\varphi,\omega)$. Now, $H_n$ is separable, so we can chose a countable base $B=\{b_i\}_{i=1}^{\infty}$ in $H_n$ and define $\bar{\Omega}_0 = \cap_{i=1}^{\infty}\Omega_{b_i}$. Then equality holds for each $f\in B$ and for each $\om\in\bar{\Omega}_0$ and $\pP(\bar{\Omega}_0)=1$ by the countability of $B$. 

The map $\varphi \rightarrow \mathcal{T}\varphi \rightarrow X(\mathcal{T}\varphi)$ of $H_n \rightarrow H_{n-\alpha} \rightarrow \R$ is continuous since $X$ is continuous on $H_n$ for $n>d/2-\alpha$ and $\mathcal{T}$ is a continuous map from $H_n$ to $H_{n-\alpha}$. Thus, both $X(\mathcal{T}\cdot,\om)$ and $M(\cdot,\om)$ are continuous functionals on $H_n$ for $\om$ in some full probability set $\tilde{\Omega}_0$ and equality therefore holds in \eqref{paperD:eq:p1} for each $\varphi\in H_n$ for $\omega\in\Omega_0 = \bar{\Omega}_0\cap\tilde{\Omega}_0$ since $B$ is linearly dense in $H_n$.
\end{proof}

\begin{rem}
By similar arguments one can show that the solution $X$ defined in Proposition \ref{paperD:prop1} also is a solution to the SPDE \eqref{paperD:eq:lspdep} in the sense that with probability one $X\in E'$ and \eqref{paperD:eq:p1} holds for every $\varphi\in E$. This is, however, a weaker statement since $E = \cap_n H_n$ and $E' = \cup_n H_n$.
\end{rem}

\begin{rem}
The solution $X$ defined in Proposition \ref{paperD:prop1} is in general a random linear functional. However, it can be identified with a random function if $\alpha>d/2$ since $X\in H_m$ almost surely for $m<\alpha-d/2$. Using the relation between $\alpha$ and the parameter $\nu$ in the Mat\'ern covariance function, $\alpha=\nu+d/2$, we see that $X\in H_m$ almost surely for $m<\nu$. Thus, $\nu$ acts as a smoothness parameter for the solution since the sample paths almost surely will be differentiable if $\nu>1$, two times differentiable if $\nu>2$ etc. 
\end{rem}

\begin{rem}
The previous remark can be strengthened using the Sobolev embedding theorem which shows that $H_n$ can be embedded in the Hölder space $C_k^r(\R^d)$ where $n - (r+k) =d/2$ and $r\in(0,1)$ \citep[see e.g.][]{adams75}. The space $C_k^r(\R^d)$ consists of functions such that all partial derivatives up to order $k$ are continuous and such that the $k$th partial derivatives are Hölder continuous with exponent $r$. Thus, if $\nu>d/2$, we almost surely have $X\in C_k^r(\R^d)$ (after possibly redefining it on a set of measure zero) where $k$ is the integer part of $\nu-d/2$ and $r = \nu-d/2 - k$.
\end{rem}

We now go back to the special case of Laplace noise and since the main interest here is ordinary random fields with Mat\'ern covariance functions, we from now on assume that $\alpha>d/2$ in \eqref{paperD:eq:lspde}. 
One sometimes uses $m(A) = l(A)\tau$, where $l$ is the Lebesgue measure and $\tau$ some constant, as a control measure for $\Lambda$. By the definition of the differential operator $\mathcal{T}$, it is then easy to see that the spectrum for the solution $X$ is
\begin{equation*}
R_X(\mv{k}) = \frac{\tau(\sigma^2 + \mu^2)}{(2\pi)^d}\frac{1}{(\kappa^2 + \mv{k}^{\trsp}\mv{k})^{\alpha}}.
\end{equation*}
Thus, the covariance function for $X$ is a Mat\'{e}rn covariance of the form \eqref{paperD:eq:matern} with $\phi^2 = \tau(\sigma^2 + \mu^2)$. Since $X$ is Laplace noise convolved with a Green function, which also has the form of a Mat\'{e}rn covariance function, the model is equivalent to the Laplace moving average models in \cite{aberg08,aberg08b}. Thus, using Theorem 1 in \cite{aberg08b}, the marginal distribution for $X(\mv{s})$ is given by the characteristic function 
\begin{equation}\label{paperD:eq:charf}
\phi_{X}(u) = \exp\left(\tau\int i\gamma G_{\alpha}(\mv{s},\mv{t})u-\log\left(1-i\mu uG_{\alpha}(\mv{s},\mv{t}) + \frac{\sigma^2u^2}{2}G_{\alpha}^2(\mv{s},\mv{t})\right)\md\mv{t}\right).
\end{equation}
A few examples of the marginal distributions for symmetric and asymmetric cases are shown in Figure \ref{paperD:fig:pdf}.
\begin{figure}[t]
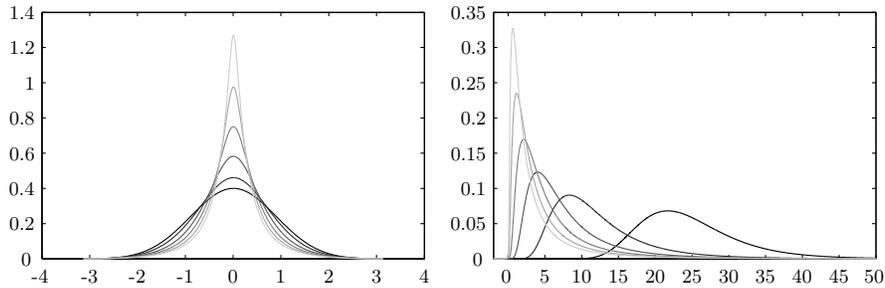

\begin{center}
\small
\resizebox{6cm}{!}{\input{figs/pdfsym.tex}}%
\resizebox{6cm}{!}{\input{figs/pdfasym.tex}}%
\end{center}
\vspace{-0.5cm}
\caption{Marginal distributions of the solution $X(\mv{s})$ to \eqref{paperD:eq:lspde} in the case of symmetric (left panel) and asymmetric (right panel) Laplace noise.}
\label{paperD:fig:pdf}
\end{figure}

\section{Hilbert space approximations}\label{paperD:sec:hilbert}
To obtain a computationally efficient representation of a Mat\'{e}rn field, the Hilbert space approximation technique by \cite{lindgren10} can be used. The starting point is to consider the stochastic weak formulation \eqref{paperD:eq:weak1} of the SPDE. A finite element approximation of the solution $X$ is then obtained by representing it as a finite basis expansion $\tilde{X}=\sum_{i=1}^n w_i \varphi_i(\mv{s})$, where the stochastic weights are calculated by requiring \eqref{paperD:eq:weak1} to hold for only a specific set of test functions $\{\psi_i,i=1,\ldots,n\}$ and $\{\varphi_i\}$ is a set of predetermined basis functions. To simplify the presentation, we first look at the case $\alpha/2 \in \N$ and then turn to the case of a general $\alpha > d/2$.  

\subsection{The case $\alpha/2 \in \N$}
To construct the approximation for $\alpha=2,4,\ldots$, we first look at the fundamental case $\alpha=2$. \cite{lindgren10} then use $\psi_i=\varphi_i$, and one then has
\begin{equation*}
   (\kappa^2-\Delta)\tilde{X}(\varphi_i) = \sum_{j=1}^n w_j\scal{\varphi_i}{(\kappa^2-\Delta)\varphi_j},
\end{equation*}
where $\scal{f}{g} = \int f(s)g(s)\md\mv{s}$. By introducing the vector $\mv{w} = (w_1, \ldots, w_n)^{\trsp}$ and a matrix $\mv{K}$ with elements ${\mv{K}_{ij} =
\scal{\varphi_i}{(\kappa^2-\Delta)\varphi_j}}$, the left hand side of
\eqref{paperD:eq:weak1} can be written as $\mv{K}\mv{w}$. Under mild conditions on the basis functions, one has
\begin{align*}
  \scal{\varphi_i}{(\kappa^2-\Delta)\varphi_j} &= \kappa^2\scal{\varphi_i}{\varphi_j}-\scal{\varphi_i}{\Delta\varphi_j}\\
  &=\kappa^2\scal{\varphi_i}{\varphi_j}+\scal{\nabla\varphi_i}{\nabla\varphi_j}.
\end{align*}
Hence, the matrix $\mv{K}$ can be written as the sum $\mv{K} = \kappa^2\mv{C} +
\mv{G}$ where $\mv{C}$ and $\mv{G}$ are matrices with elements $\mv{C}_{ij} = \scal{\varphi_i}{\varphi_j}$ and $\mv{G}_{ij} = \scal{\nabla\varphi_i}{\nabla\varphi_j}$ respectively.

\subsubsection{Gaussian noise}
In the Gaussian case, when $\dot{M}$ is Gaussian white noise, the right hand side of \eqref{paperD:eq:weak1} under the finite element approximation can be shown to be Gaussian with mean zero and covariance $\mv{C}$. Thus, one has
\begin{equation}\label{paperD:eq:w}
 \mv{w} \sim \pN\left(0,\mv{K}^{-1}\mv{C}\mv{K}^{-1}\right).
\end{equation}
For higher order $\alpha/2 \in \N$, the weak solution is obtained recursively.
If, for example, $\alpha = 4$ the solution to
$(\kappa^2 - \Delta)^2X_0 = \noise$ is obtained by solving
${(\kappa^2 -\Delta)X_0 = \tilde{X}}$, where $\tilde{X}$ is the solution for the case $\alpha = 2$. This results in replacing the matrix $\mv{K}$ with a matrix $\mv{K}_{\alpha}$ defined recursively as ${\mv{K}_{\alpha} =\mv{K}\mv{C}^{-1}\mv{K}_{\alpha-2}}$, where $\mv{K}_2 = \mv{K}$. For more details about these representations in the Gaussian case, see \cite{lindgren10}.

So far, we have not specified how the basis functions $\{\varphi_i\}$ should be chosen, but this choice will determine the quality of the approximation as well as some computational properties. If, for example, Daubechies wavelets are used as basis functions, the precision matrix (inverse covariance matrix) $\mv{Q}$ for the weights is a sparse matrix \citep{bolin09}, which facilitates the use of efficient sparse matrix techniques when using this model. \cite{lindgren10} used piecewise linear basis functions induced by triangulating the domain, and in this case $\mv{C}$ is a sparse matrix, but its inverse is dense. To obtain a sparse precision matrix in this case (which is needed for efficient GMRF computations), one can approximate $\mv{C}$ with a diagonal matrix $\tilde{\mv{C}}$ with elements $\tilde{\mv{C}}_{ii} = \int\varphi_i(\mv{s})\md\mv{s}$. To simplify the notation later, we denote the $i$th element on the diagonal by $\mv{a}_i$ as it is the area where $\varphi_i > \varphi_j$ for $j\neq i$. For more details on this approximation and the choice of basis functions, see \citet{bolin09}.

\subsubsection{Laplace noise}
For the Laplace case, one has $\dot{M} = \dot{\Lambda}$ in the weak formulation \eqref{paperD:eq:weak1}. Under the finite element approximation, the left-hand side can, as in the Gaussian case, be written as $\mv{K}_{\alpha}\mv{w}$. 
Using Theorem 1 in \cite{aberg08b}, the distribution of the right-hand side in the case of Laplace noise is given by the characteristic function
\begin{equation*}
\phi_{\Lambda}(\mv{u}) =  \exp\left(\tau\int_{\Omega}i\gamma \mv{\varphi}(\mv{s})^{\trsp}\mv{u}-\log\left(1-i\mu\mv{\varphi}(\mv{s})^{\trsp}\mv{u} + \frac{\sigma^2}{2}(\mv{\varphi}(\mv{s})^{\trsp}\mv{u})^2\right)\md\mv{s}\right),
\end{equation*}
where $\mv{\varphi}(\mv{s}) = \left(\varphi_1(\mv{s}),\ldots,\varphi_n(\mv{s})\right)^{\trsp}$. This representation is not very convenient for approximation and simulation of the model. Instead we will use a representation based on the series expansion \eqref{paperD:eq:lfieldseries} of $\Lambda$. However, for a moment, we turn to the more general setup of type-G processes to hint at how this technique could be applied also for this broader class of random fields.  

Recall that  a L\'{e}vy process is type G if its increments can be represented as a Gaussian variance mixture $V^{1/2}Z$ where $Z$ is a standard Gaussian variable and $V$ is a non-negative infinitely divisible random variable. Clearly, the Laplace fields are of type G as their increments are of the form $\Gamma^{1/2} Z$ where $\Gamma$ is a gamma variable. \cite{rosinski91} showed that every L\'{e}vy process of type G can be represented as a series expansion similar to the expansion \eqref{paperD:eq:lfieldseries} for the Laplace fields. This expansion also holds in $\R^d$, and for a compact domain $D\in \R^d$ it can be written as
\begin{equation*}
M(\mv{s}) = \sum_{k=1}^{\infty}G_k g(\gamma_k)^{\frac{1}{2}}\mv{1}(\mv{s} \geq \mv{s}_k),
\end{equation*}
where the function $g$ is the generalized inverse of the tail L\'{e}vy measure for $V$ and the other variables are the same as in the Laplace case \eqref{paperD:eq:lfieldseries}. Since $V$ is infinitely divisible, there exists a non-decreasing L\'evy process $V(\mathbf s)$ with increments distributed the same as $V$. This process has the series representation
\begin{equation}\label{paperD:eq:Vseries}
V(\mv{s}) = \sum_{k=1}^{\infty}g(\gamma_k)^{\frac{1}{2}}\mv{1}(\mv{s} \geq \mv{s}_k).
\end{equation}
Now, consider the integral of some basis function $\varphi_i$ with respect to $M$, which can be represented as
\begin{equation}\label{paperD:eq:intseries}
\int_D \varphi_i(\mv{s})M(\md\mv{s}) \overset{d}{=} \sum_{k=1}^{\infty}\varphi_i(\mv{s}_k)G_k\sqrt{g(\gamma_k)}.
\end{equation}
Thus, the distribution of $(\int_D \varphi_1(\mathbf s) M(d\mathbf s),\dots,\int_D \varphi_n(\mathbf s) M(d\mathbf s))$ can be approximated in distribution by taking partial sums of the series in \eqref{paperD:eq:intseries}.
Another way of calculating the distribution is to evaluate the integrals by conditioning on the variance process $V(\mv{s})$ \citep{wiktorsson02}; given that $\int_D \varphi_i^2(\mv{s})V(\md\mv{s}) < \infty$, the integral conditionally on $V$ is simply a Gaussian variable 
\begin{equation*}
\int_D \varphi_i(\mv{s})M(\md\mv{s}) | V \sim \pN\left(0, \int_D \varphi_i^2(\mv{s})V(\md\mv{s})\right).
\end{equation*}

Going back to the case of Laplace noise. If $M$ is a Laplace field corresponding to the Laplace measure $\Lambda$, the variance process is a gamma process, $\Gamma(\mv{s})$, so by the argument above one has that the right hand side of \eqref{paperD:eq:weak1} under the finite element approximation and conditionally on the gamma process is $\pN(\tilde{\mv{m}},\tilde{\mv{\Sigma}})$, where the elements of $\tilde{\mv{m}}$ and $\tilde{\mv{\Sigma}}$ are given by
\begin{align*}\label{paperD:eq:lcov}
\tilde{\mv{\Sigma}}_{ij} &= \pC\left( \int_D \varphi_i(\mv{s})\Lambda(\md\mv{s})\,, \int_D \varphi_j(\mv{s})\Lambda(\md\mv{s})  \middle|\, \Gamma\right) =  \int_{D} \varphi_i(\mv{s})\varphi_j(\mv{s})\Gamma(\md\mv{s}),\\
\tilde{\mv{m}}_i &= \pE\left(\int_D \varphi_i(\mv{s})\Lambda(\md\mv{s}) \middle| \,\Gamma\right) = \gamma\int_D \varphi_i(\mv{s})\md\mv{s} + \int_D \varphi_i(\mv{s})\Gamma(\md\mv{s}).
\end{align*}
Given this, the weights $\mv{w}$ can be calculated conditionally on the gamma process, $\Gamma(\mv{s})$, as
\begin{equation}\label{paperD:eq:lw}
 \mv{w} | \Gamma\sim \pN\left(\mv{K}_{\alpha}^{-1}\tilde{\mv{m}},\mv{K}_{\alpha}^{-1}\tilde{\mv{\Sigma}}\mv{K}_{\alpha}^{-1}\right),
\end{equation}
where $\mv{K}_{\alpha}$ is defined recursively as in the Gaussian case. 

It would seem as one has not gained much by using the conditional representation since the conditional mean and covariances, $\tilde{\mv{m}}_i$ and $\tilde{\mv{\Sigma}}_{ij}$, do not have any simple distributions. One way of approximating them is to approximate the integrals with respect to the Gamma process using the right hand side of \eqref{paperD:eq:Vseries} with a finite number of terms. However, by using compactly supported linear basis functions, one can simplify things further. Thus, now assume that the basis functions are piecewise linear functions induced by some triangulation of the domain. One can then perform the same Markov approximation as in the Gaussian case. This results in an approximation of the right-hand side of \eqref{paperD:eq:weak1} conditionally on the gamma process distributed as $\pN(\mv{m},\mv{\Sigma})$ with \mbox{$\mv{m} = \gamma\tau\mv{a} + \mu\mv{\Gamma}$}, and $\mv{\Sigma} =  \diag(\mv{\Gamma})$. Here, the gamma variables $\mv{\Gamma}_i \sim \Gamma(\tau\mv{a}_i,1)$ are independent and $\mv{a}_i = \int \varphi_i(\mv{s})\md\mv{s}$, and these can be calculated without numerically estimating the integrals with respect to the gamma process. 

\cite{bolin09} studies how this approximation affects the resulting covariance function of the process in the Gaussian case, and it is shown that the error is small if the approximation is used for piecewise linear basis functions. Although additional studies are needed in the non-Gaussian case, the results are likely similar so that the simplification has no large impact on the approximation. Figures \ref{paperD:fig:simulation}-\ref{paperD:fig:2dcovs} show that the approximation is accurate in one and two dimensions as explained in Section \ref{paperD:sec:sampel}.

\subsection{The solution for general $\alpha>d/2$}
If one could approximate the solution to \eqref{paperD:eq:lspde} for $\alpha=1$, the recursive scheme discussed above could be used to represent the solutions for all positive odd $\alpha$. In the Gaussian case, \cite{lindgren10} use a least-squares method where the test functions are chosen as $\psi_i = (\kappa^2-\Delta)^{\frac{1}{2}}\varphi_i$. The left-hand side of \eqref{paperD:eq:weak1} can then be expressed as $\mv{K}\mv{w}$ and the right-hand side is a mean zero Gaussian variable with covariance matrix $\mv{K}$. This follows from Lemma 2 in \cite{lindgren10}, which shows that the covariance between element $i$ and element $j$ on the right-hand side can be written as
\begin{equation*}
\mv{\Sigma}_{ij} = \scal{(\kappa^2-\Delta)^{\frac{1}{2}}\varphi_i}{(\kappa^2-\Delta)^{\frac{1}{2}}\varphi_j} = \scal{(\kappa^2-\Delta)\varphi_i}{\varphi_j} = \mv{K}_{ij}.
\end{equation*}
The stochastic weights therefore form a GMRF $\mv{w}\sim \pN(\mv{0},\mv{K}^{-1})$.
This argument is unfortunately not applicable in the non-Gaussian case as the covariance between the elements given the gamma process $\Gamma(\mv{s})$ is
\begin{align*}
\mv{\Sigma}_{ij} &= \pC\left(\int_D (\kappa^2-\Delta)^{\frac{1}{2}}\varphi_i(\mv{s})\Lambda(\md\mv{s}),\int_D (\kappa^2-\Delta)^{\frac{1}{2}}\varphi_j(\mv{s})\Lambda(\md\mv{s}) \middle| \,\Gamma\right)\\
&=  \int_{D} \left((\kappa^2-\Delta)^{\frac{1}{2}}\varphi_i(\mv{s})\right)\left((\kappa^2-\Delta)^{\frac{1}{2}}\varphi_j(\mv{s})\right)\Gamma(\md\mv{s})\\
&\neq  \int_{D} \left((\kappa^2-\Delta)\varphi_i(\mv{s})\right)\varphi_j(\mv{s})\Gamma(\md\mv{s}).
\end{align*}
We have not been able to find an easy way of evaluating $\mv{\Sigma}_{ij}$ in the non-Gaussian case, and it seems as this least-squares procedure is not extendable to the non-Gaussian case. However, if one instead uses $\psi_i=\varphi_i$, the right-hand side of \eqref{paperD:eq:weak1} conditionally on the variance process is $\pN(\mv{m},\mv{\Sigma})$, as in the case $\alpha = 2$. With this as a starting point, one can use a finite element matrix transfer technique (FE-MTT) to obtain a discretized approximation of the solution. \cite{simpson08} studied such methods for sampling generalized Mat\'{e}rn fields on locally planar Riemannian manifolds, and argued that one could sample the stochastic weights for a general $\alpha$ using the matrix transfer equation
$(\mv{C}^{-1}\mv{K})^{\alpha/2}\mv{w} \sim \pN(0,\mv{C}^{-1})$.
To simplify the notations in later sections, denote $\mv{K}_{\alpha}=(\mv{C}^{-1}\mv{K})^{\alpha/2}$ and note that we now have changed the definition of $\mv{K}_{\alpha}$ from the one that was used for even $\alpha$. The weights $\mv{w}$ are then mean zero Gaussian with a precision matrix ${\mv{Q}_{\alpha} = \mv{K}_{\alpha}\mv{C}^{-1}\mv{K}_{\alpha}}$.
In the case $\alpha=2$, this discretization coincides with the approximation described above, but it can be used for any $\alpha>d/2$. 

Now in the non-Gaussian case, the results from the case $\alpha=2$ can be used directly to get a right-hand side that is Gaussian with mean $\mv{m}$ and covariance $\mv{\Sigma}$ conditionally on the variance process. As in the Gaussian case, this should be multiplied with $\mv{C}^{-1}$ to get consistency in the FE-MTT procedure. Hence, in the case of Laplace noise the weights are given by
\begin{equation}\label{paperD:eq:lwodd}
\mv{w} | \Gamma \sim \pN\left(\mv{K}_{\alpha}^{-1}\mv{C}^{-1}\mv{m},\mv{K}_{\alpha}^{-1}\mv{C}^{-1}\mv{\Sigma}\mv{C}^{-1}\mv{K}_{\alpha}^{-1}\right).
\end{equation}
Again, for the case $\alpha=2$, this coincides with the procedure described in the section above, and because of this we will from now on use this FE-MTT procedure for all $\alpha>d/2$. Consistency of the FE-MTT procedure follows from similar arguments as in \cite{simpson08}. These arguments do not provide a rate of convergence as the number of basis functions are increased, and as for the Gaussian case, the rate of convergence and the numerical properties of the approximation are strongly dependent on $\alpha$.

\section{Sampling from the model}\label{paperD:sec:sampel}
Using the finite element representation obtained in the previous section it is easy to generate samples from the SPDE \eqref{paperD:eq:lspde}. Assume that we want sample the model at locations $\mv{s} = \left(\mv{s}_1,\ldots, \mv{s}_n\right)$, and let $\mv{\Phi}$ be a matrix with elements $\mv{\Phi}_{ij} = \varphi_j(\mv{s}_i)$. Samples can now be generated using the following three-step algorithm.

\begin{alg}\label{paperD:alg:sampel}
Sampling the Laplace driven SPDE \eqref{paperD:eq:lspde}.
\begin{enumerate}
\item Generate two independent random vectors $\mv{\Gamma}$ and $\mv{Z}$, where $\mv{\Gamma}_i \sim \Gamma(\tau\mv{a}_i,1)$ and $\mv{Z}_i \sim \pN(0,1)$.
\item Let $\mv{\Lambda} = \gamma\tau\mv{a} + \mu\mv{\Gamma} + \diag(\sqrt{\mv{\Gamma}})\mv{Z}$ and calculate $\mv{w} = \mv{C}^{-1}\mv{\Lambda}$.
\item $\mv{X} = \mv{\Phi} \mv{K}_{\alpha}^{-1}\mv{w}$ is now a sample of the random field at the locations $\mv{s}$.
\end{enumerate}
\end{alg}
The last step could potentionally be computationally expensive for large simulations. However, if $\alpha$ is even, one can take advantage of the sparsity of $\mv{K}_{\alpha}$ and solve the equation system $\mv{v} = \mv{K}_{\alpha}^{-1}\mv{w}$ efficiently without calculating the inverse by using Cholesky factorization and back substitution as suggested by \cite{rue1}. For other $\alpha>d/2$, $\mv{K}_{\alpha}$ is not sparse and the Cholesky method will not improve the computational efficiency. However, as \cite{simpson08} shows, one can instead use Krylov subspace methods in the calculations to obtain efficient sampling schemes. The basic problem for general $\alpha$ is to solve the matrix equation $\mv{v} = (\mv{C}^{-1}\mv{K})^{-\frac{\alpha}{2}}\mv{w}$, and there are a number of methods with different computational properties that can be used. In this work we use the method by \cite{trefethen08}, which is based on combining contour integrals evaluated by the periodic trapezoid rule with conformal maps involving Jacobi elliptic functions. 

\begin{figure}[t]
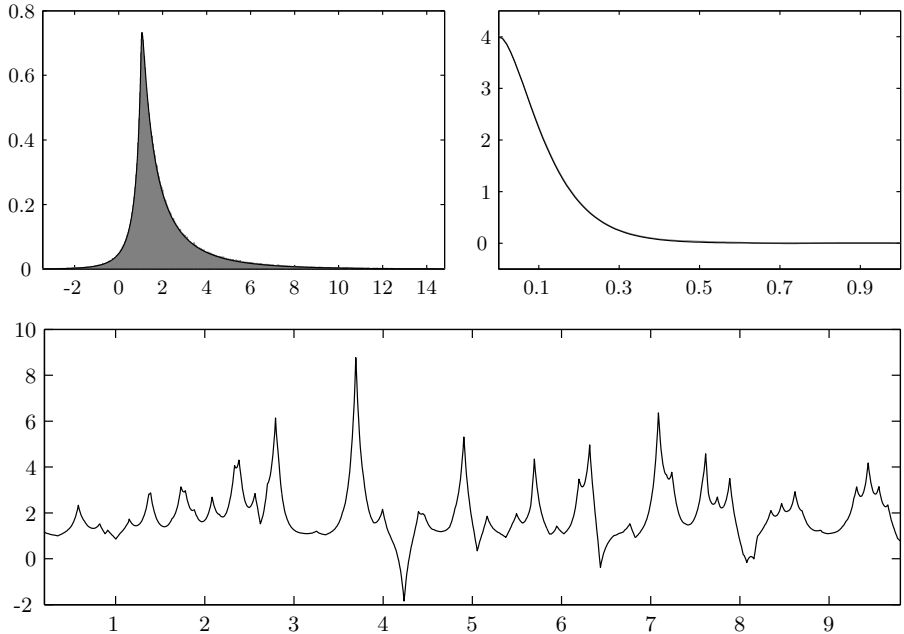

\begin{center}
\small
\resizebox{0.5\linewidth}{!}{\input{figs/pdf1d1.tex}}%
\resizebox{0.5\linewidth}{!}{\input{figs/cov1d1.tex}}\\%
\resizebox{\linewidth}{!}{\input{figs/sim1d1.tex}}%
\end{center}
\vspace{-0.5cm}
\caption{The lower panel shows a simulation of the Laplace driven SPDE \eqref{paperD:eq:lspde} on $\R$ with parameters $\mu=\gamma=\sigma=1$, $\tau=2$, $\kappa = 15$, and $\alpha = 2$. The upper left panel shows a histogram of the samples from $1000$ simulations together with the true density. The upper right panel shows the empirical covariance function for the samples (grey curve) together with the true Mat\'{e}rn covariance function (black curve). It is difficult to see the grey curve since the two curves are very similar.}
\label{paperD:fig:simulation}
\end{figure}

In Figure \ref{paperD:fig:simulation}, a simulation of a process on $\R$ with parameters ${\mu=\gamma=\sigma=1}$, $\tau=2$, $\kappa = 15$, and $\alpha = 2$ is shown. Since $\mv{K}_{\alpha}$ is sparse in this case, the Cholesky method is used for the simulation. In the upper left panel, a histogram of the samples from $1000$ simulations is shown together with the theoretical density, calculated using numerical Fourier inversion of the characteristic function~\eqref{paperD:eq:charf}. In the upper right panel, the empirical covariance function of the samples is shown together with the theoretical Mat\'{e}rn covariance function. Two more examples of densities and covariance functions for different parameter settings are shown in Figure \ref{paperD:fig:sim2}. In the upper panels, we have $\alpha=1$, which results in an exponential covariance function. The other parameters are $\mu=\gamma=0$, $\sigma=1$, and ${\tau=\kappa=10}$, which results in a symmetric distribution. In the lower panels, we have $\alpha=3.5$ which results in a smoother field. The other parameters are $\mu=\sigma=0.1$, $\gamma=0$, $\tau=10$, and $\kappa=20$, which results in an asymmetric distribution. In both cases in Figure \ref{paperD:fig:sim2}, the Krylov subspace method is used for the simulations.

In Figure \ref{paperD:fig:2dcovs} and Figure \ref{paperD:fig:2dcovs2}, two simulations of fields on $\R^2$ are shown together with the corresponding covariance functions, densities, and empirically estimated versions based on $1000$ simulations each. As seen in the figures for all five examples, there is a close agreement between the histograms and the true densities, and between the true covariance functions and the empirically estimated covariance functions for all these parameter settings, indicating that the approximation procedure works as intended. A more detailed analysis of the simulation procedure is outside the scope of this article, but it should be noted that the SPDE approximation using piecewise linear basis functions does not provide convergence of higher-order derivatives, and the simulation procedure is therefore not appropriate for applications where such properties are important.

\begin{figure}[t]
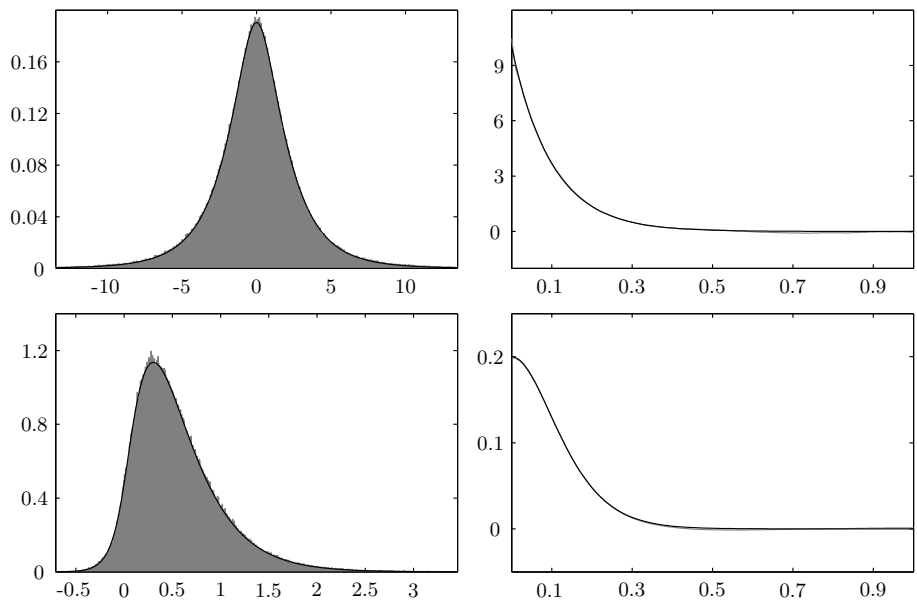

\begin{center}
\small
\resizebox{0.5\linewidth}{!}{\input{figs/pdf1d2.tex}}%
\resizebox{0.5\linewidth}{!}{\input{figs/cov1d2.tex}}\\%
\resizebox{0.5\linewidth}{!}{\input{figs/pdf1d3.tex}}%
\resizebox{0.5\linewidth}{!}{\input{figs/cov1d3.tex}}%
\end{center}
\vspace{-0.5cm}
\caption{Simulation results as in Figure \ref{paperD:fig:simulation} with different parameters. The top row shows a symmetric case with parameters $\mu=\gamma=0$, $\sigma=1$, $\kappa=\tau=10$, and $\alpha=1$. The bottom row shows an asymmetric case with parameters $\mu=\sigma=0.1$, $\gamma=0$, $\kappa=20$, $\tau=10$ and $\alpha=3.5$.}
\label{paperD:fig:sim2}
\end{figure}

\begin{figure}[t]
\begin{center}
\small
\resizebox{\linewidth}{!}{\includegraphics[bb=57 278 563 495,clip=]{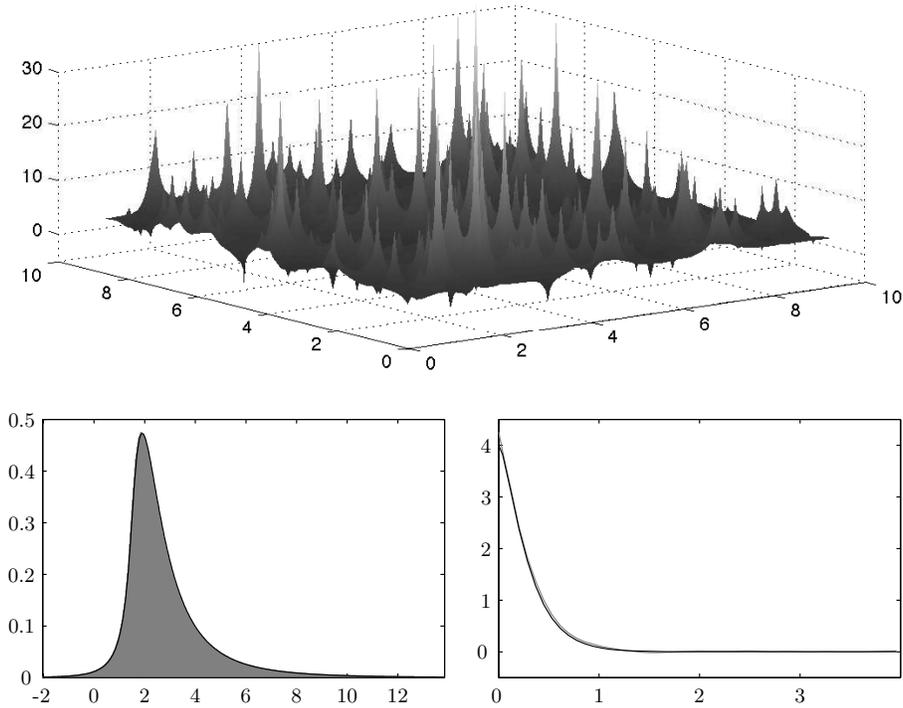}}\\
\small
\resizebox{0.5\linewidth}{!}{\input{figs/pdf2d1.tex}}%
\resizebox{0.5\linewidth}{!}{\input{figs/cov2d1.tex}}\\%

\end{center}
\vspace{-0.5cm}
\caption{
A simulation of an asymmetric model \eqref{paperD:eq:lspde} in $\R^2$ where the parameters are $\kappa = 5$, $\sigma=\mu=\gamma = 1$, $\tau=2$, and $\alpha=2$. The covariance functions and densities for these fields can be seen in the second row. The empirically estimated versions are based on $1000$ simulations.}
\label{paperD:fig:2dcovs}
\end{figure}

\begin{figure}[t]
\begin{center}
\small
\resizebox{\linewidth}{!}{\includegraphics[bb=57 278 563 495,clip=]{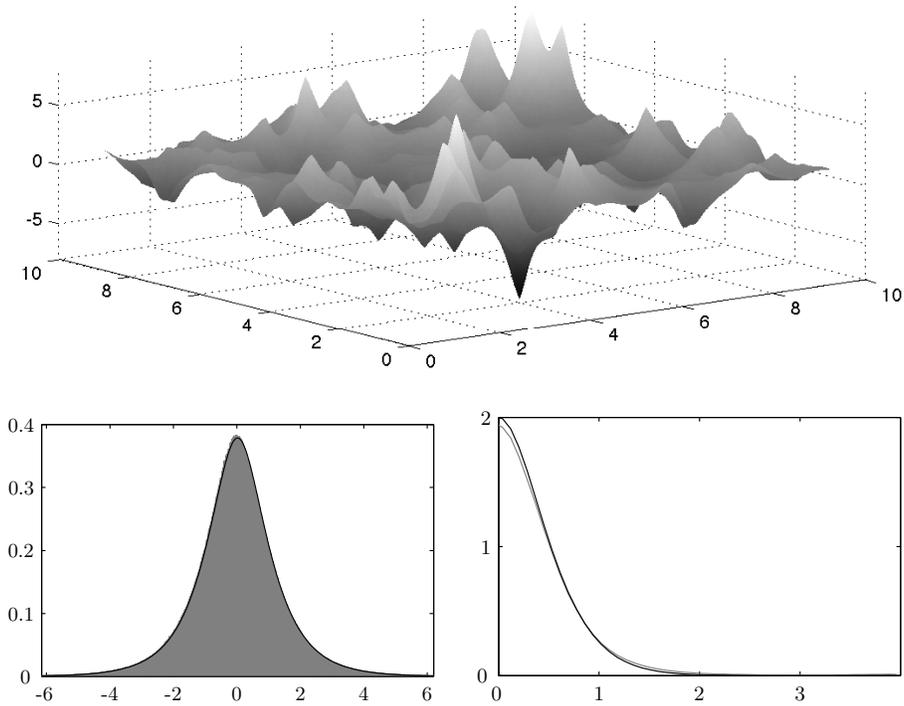}}
\resizebox{0.5\linewidth}{!}{\input{figs/pdf2d2.tex}}%
\resizebox{0.5\linewidth}{!}{\input{figs/cov2d2.tex}}%
\end{center}
\vspace{-0.5cm}
\caption{
A simulation of a symmetric model \eqref{paperD:eq:lspde} in $\R^2$ with parameters $\kappa = 5$, $\sigma=1$, $\mu=\gamma = 0$, $\tau=2$, and $\alpha=4$. The covariance function and density are shown in the second row. The empirically estimated versions are based on $1000$ simulations.}
\label{paperD:fig:2dcovs2}
\end{figure}

\section{Parameter estimation}\label{paperD:sec:estimation}
Parameter estimation for Laplace moving average models is not easy since there is no closed form expression for the parameter likelihood. Recently, \cite{joerg10} derived a method of moments-based estimation procedure for these types of models. In their method, the convolution kernel is first estimated from the spectral density of the data, and given the estimated kernel, the parameters in the Laplace distribution are estimated by fitting the theoretical moments of the Laplace distribution to the sample moments. The method is quite simple although some special care has to be taken to handle the cases when the method of moments equation system does not have a solution, which can happen for certain values of the sample skewness and excess kurtosis.

Using the SPDE formulation, parameter estimation can instead be performed in a likelihood framework. One of the advantages with this is that maximum likelihood parameter estimates always are in the allowed parameter space. Another advantage is that the estimates will account for all relevant information in the data, which might not be the case for method of moment estimates. 

To be able to estimate the parameters in a maximum likelihood framework, the problem is interpreted as a missing data problem which facilitates use of the Expectation Maximization (EM) algorithm \citep{Dempster77}. The proposed EM algorithm is based on the same ideas as the ones in \citet{Lange89} and \citet{Protassov04} which looked at EM estimation in the case of iid observations of certain Gaussian mixtures. Our main contribution is the extension of these ideas to the random field setting.

Assume we have measurements $\mv{X}$ of the process $X(\mv{s})$ taken at some locations and that the Hilbert space approximation procedure is used with a basis obtained by triangulating the measurement locations. In this case, the matrix $\mv{\Phi}$ is diagonal and conditioning on the measurements and the parameters is equivalent to conditioning on $\mv{\Lambda}$ and the parameters as there is a one-to-one correspondence between the two through $\mv{\Lambda} = \mv{C}\mv{K}_{\alpha}\mv{\Phi}^{-1}\mv{X}$, see Algorithm \ref{paperD:alg:sampel}. To obtain simpler updating expressions, we first make a change of variables by introducing the parameter $\bar{\gamma} = \gamma\tau$ and estimate this parameter instead of $\gamma$. As for Gaussian Mat\'{e}rn models, the shape parameter $\nu$ is difficult to estimate accurately and it is therefore assumed to be known throughout this section and no attempt is made at estimating it.

Augmenting the data with the unknown (missing) gamma variables, the augmented likelihood is $L(\mv{\theta}|\mv{X},\mv{\Gamma}) = \pi(\mv{X}|\mv{\Gamma},\mv{\theta})\pi(\mv{\Gamma}|\mv{\theta})$,
and the loss-function that is needed for the EM-procedure is
\begin{equation*}
\mathcal{Q}(\mv{\theta},\mv{\theta}^{(j)}) = \pE\left(\log L(\mv{\theta}|\mv{X},\mv{\Gamma}) \middle|\mv{X},\mv{\theta}^{(j)}\right),
\end{equation*}
where $\mv{\theta}^{(j)}$ is an estimate of $\mv{\theta} = \left(\kappa,\sigma,\mu,\bar{\gamma},\tau\right)$ at iteration $j$, and the expectation is taken according to the distribution of $\mv{\Gamma}$ given $\mv{X}$. We have ${\mv{X}|\mv{\Gamma},\mv{\theta} \sim \pN(\mv{m},\sigma^2\mv{\Sigma})}$, where 
 $\mv{m}=\mv{\Phi}\mv{K}_{\alpha}^{-1}\mv{C}(\bar{\gamma}\mv{a} + \mu\mv{\Gamma})$, $\mv{\Sigma} = \mv{\Phi}\mv{K}_{\alpha}^{-1}\mv{C}\mv{D}_{\Gamma}\mv{C}\mv{K}_{\alpha}^{-1}\mv{\Phi}$, and $\mv{D}_{\Gamma}$ is the diagonal matrix with the vector $\mv{\Gamma}$ on the main diagonal. The second part of the augmented likelihood can be written as $\pi(\mv{\Gamma}|\mv{\theta}) = \prod \pi(\mv{\Gamma}_i|\mv{\theta})$ since the components in $\mv{\Gamma}$ are independent gamma variables $\mv{\Gamma}_i$ with shape parameters $\tau \mv{a}_i$ and scale one, where $\mv{a}_i$ are known constants depending on the basis used. The log-likelihood is
\begin{align*}
\log L(\mv{\theta}|\mv{X},\mv{\Gamma}) = & -n\log(\sigma) + \log(|\mv{K}_{\alpha}|) - \frac{1}{2\sigma^2}(\mv{X}-\mv{m})^{\trsp}\mv{\Sigma}^{-1}(\mv{X}-\mv{m})\\ &+\sum_{i=1}^n\left(\tau\mv{a}_i\log\mv{\Gamma}_i -\log\Gamma(\tau\mv{a}_i)\right) + C,
\end{align*}
where the constant $C$ does not depend on the unknown parameters. Thus, using the relation between $\mv{X}$ and $\mv{\Lambda}$, the loss-function is
\begin{equation*}
\begin{split}
\mathcal{Q}(\mv{\theta},\mv{\theta}^{(j)}) =& -n\log(\sigma) + \log(|\mv{K}_{\alpha}|) - \frac{1}{2\sigma^2}\left((\mv{\Lambda}-\bar{\gamma} \mv{a})^{\trsp}\mv{D}_{\pE(\mv{\Gamma}^{-1}|\star)}(\mv{\Lambda}-\bar{\gamma} \mv{a}) \right.\\ &+\left.\mu^2 \mv{1}^{\trsp}\pE(\mv{\Gamma}|\star)
+ 2\bar{\gamma}\mu \mv{a}^{\trsp}\mv{1} - 2\mu \mv{\Lambda}^{\trsp}\mv{1}\right) \\ &+\sum_{i=1}^n\left(\tau\mv{a}_i\pE(\log\mv{\Gamma}_i|\star) -\log\Gamma(\tau\mv{a}_i)\right) + C,
\end{split}
\end{equation*}
where $\pE(\cdot|\star)$ denotes  $\pE(\cdot|\theta^{(j)},\mv{X})$. The expectations needed to evaluate the loss-function are $\pE(\mv{\Gamma}|\star)$, $\pE(\mv{\Gamma}^{-1}|\star)$, and $\pE(\log\mv{\Gamma}_i|\star)$. To calculate these, first note that \cite[see][formula 3.472.9]{GR00}
\begin{equation}\label{paperD:eq:macdonald}
\mathcal{I}(a,b,c) = \int_0^\infty x^{a-1}e^{-\frac{b}{x}-cx}\md x = 2\left(\frac{b}{c}\right)^{\frac{a}{2}}K_{a}\left(2\sqrt{bc}\right).
\end{equation}
Using this expression, the expectation $\pE(\mv{\Gamma}_i|\star)$ can be written as
\begin{align*}
\hspace{-0.4cm}\pE(\mv{\Gamma}_i|\star) &= \int \mv{\Gamma}_i \pi(\mv{\Gamma}_i | \mv{X},\mv{\theta})\md \mv{\Gamma}_i = 
\frac{\int \mv{\Gamma}_i \pi(\mv{X} | \mv{\Gamma}_i,\mv{\theta}) \pi(\mv{\Gamma}_i|\mv{\theta})\md \mv{\Gamma}_i}{\pi(\mv{X}|\mv{\theta})}\\
&=\frac{\int \mv{\Gamma}_i \pi(\mv{X} | \mv{\Gamma}_i,\mv{\theta}) \pi(\mv{\Gamma}_i|\mv{\theta})\md \mv{\Gamma}_i}{\int \pi(\mv{X}|\mv{\Gamma}_i,\mv{\theta})\pi(\mv{\Gamma}_i|\mv{\theta}) \md \mv{\Gamma}_i} = \frac{\mathcal{I}\left(\tau\mv{a}_i+\frac{1}{2},\frac{(\mv{\Lambda}_i-\bar{\gamma}\mv{a}_i)^2}{2\sigma^2},1+\frac{\mu^2}{2\sigma^2}\right)}{\mathcal{I}\left(\tau\mv{a}_i-\frac{1}{2},\frac{(\mv{\Lambda}_i-\gamma \tau\mv{a}_i)^2}{2\sigma^2},1+\frac{\mu^2}{2\sigma^2}\right)} \\
&= \frac{|\mv{\Lambda}_i-\bar{\gamma}\mv{a}_i|}{\sqrt{2\sigma^2+\mu^2}} \frac{K_{\tau\mv{a}_i+\frac{1}{2}}\left(\sigma^{-2}|\mv{\Lambda}_i-\bar{\gamma}\mv{a}_i|\sqrt{2\sigma^2+\mu^2}\right)} {K_{\tau\mv{a}_i-\frac{1}{2}}\left(\sigma^{-2}|\mv{\Lambda}_i-\bar{\gamma}\mv{a}_i|\sqrt{2\sigma^2+\mu^2}\right)}.
\end{align*}
If the argument in the Bessel functions is very small or very large one might get numerical problems when evaluating this expression depending on how it is implemented. In the case of small arguments, one can use the following approximation to improve the numerical stability
\begin{align*}
&K_{a}(x) \approx \frac{\Gamma(|a|)}{2}\left(\frac{2}{x}\right)^{|a|}, \mbox{ if $a \neq 0$ and $x \ll \sqrt{|a|+1}$}.
\end{align*}
The expectation $\pE(\mv{\Gamma}_i|\star)$ then simplifies to
\begin{align*}
\pE(\mv{\Gamma}_i|\star) &\approx 
\begin{cases}
\left(\tau\mv{a}_i - \frac{1}{2}\right)\frac{2\sigma^2}{2\sigma^2 + \mu^2}, & \tau > \frac{1}{2\mv{a}_i},\\
\frac{\Gamma\left(\tau\mv{a}_i+\frac{1}{2}\right)}{\Gamma\left(\frac{1}{2}-\tau\mv{a}_i\right)}\frac{\left(2\sigma^2\right)^{2\tau\mv{a}_i}}{\left(2\sigma^2 + \mu^2\right)^{\frac{2\tau\mv{a}_i+1}{2}}}|\mv{\Lambda}_i-\bar{\gamma}\mv{a}_i|^{1-2\tau\mv{a}_i}, & \tau < \frac{1}{2\mv{a}_i}.
\end{cases}
\end{align*}
In the case of large arguments, one can instead use the approximation
\begin{align*}
&\frac{K_{a}(x)}{K_{a-1}(x)} \approx 1 + \left(a - \frac{1}{2}\right)\frac{1}{x},
\end{align*}
which gives the following approximation for $\pE(\mv{\Gamma}_i|\star)$
\begin{equation*}
\pE(\mv{\Gamma}_i|\star) \approx \frac{|\mv{\Lambda}_i-\bar{\gamma} \mv{a}_i|}{\sqrt{2\sigma^2+\mu^2}} + \frac{\tau\mv{a}_i\sigma^2}{2\sigma^2 + \mu^2}.
\end{equation*}

The expectation $\pE(\mv{\Gamma}_i^{-1}|\star)$ is calculated similarly using \eqref{paperD:eq:macdonald} and can be written as 
\begin{equation}\label{paperD:eq:Egammainv}
\pE(\mv{\Gamma}_i^{-1}|\star) = \frac{\sqrt{2\sigma^2+\mu^2}}{|\mv{\Lambda}_i-\bar{\gamma}\mv{a}_i|} \frac{K_{\tau\mv{a}_i-\frac{3}{2}}\left(\sigma^{-2}|\mv{\Lambda}_i-\bar{\gamma}\mv{a}_i|\sqrt{2\sigma^2+\mu^2}\right)} {K_{\tau\mv{a}_i-\frac{1}{2}}\left(\sigma^{-2}|\mv{\Lambda}_i-\bar{\gamma}\mv{a}_i|\sqrt{2\sigma^2+\mu^2}\right)}.
\end{equation}
Evaluating modified Bessel functions numerically is computationally expensive and should therefore be avoided as much as possible when implementing the estimation procedure. To that end, one can express $K_{\tau\mv{a}_i-\frac{3}{2}}(\cdot)$ using the following recurrence relationship for modified Bessel functions
\begin{equation*}
K_{a}(x) = K_{a+2}(x) - \frac{2(a + 1)}{x}K_{a+1}(x),
\end{equation*}
giving the following expression for $\pE(\mv{\Gamma}_i^{-1}|\star)$ in terms of $\pE(\mv{\Gamma}_i|\star)$
\begin{equation*}
\pE(\mv{\Gamma}_i^{-1}|\star) = \frac{(\mu^2 + 2\sigma^2)\pE(\mv{\Gamma}_i|\star) - \sigma^2(2\tau\mv{a}_i -1)}{(\mv{\Lambda}_i-\bar{\gamma}\mv{a}_i)^2}.
\end{equation*}
Using this expression instead of \eqref{paperD:eq:Egammainv}, one only has to evaluate two modified Bessel functions instead of three. 

Finally, the expectation $\pE(\log(\mv{\Gamma}_i)|\star)$ is similarly written as
\begin{equation*}
\pE(\log(\mv{\Gamma}_i)|\star) = 
\frac{\int \log(\mv{\Gamma}_i) \pi(\mv{X} | \mv{\Gamma}_i,\mv{\theta}) \pi(\mv{\Gamma}_i|\mv{\theta})\md \mv{\Gamma}_i}{\pi(\mv{X}|\mv{\theta})}.
\end{equation*}
The denominator is the same as in the previous expectations, while calculating the nominator requires evaluating an integral on the form
\begin{equation}\label{paperD:eq:logint}
\mathcal{I}_{log}(a,b,c) = \int_0^\infty \log(x)x^{a-1}\exp\left(-\frac{b}{x}-cx\right)\md x.
\end{equation}
To calculate this integral, we differentiate \eqref{paperD:eq:macdonald} with respect to $a$ and obtain
\begin{align*}
\mathcal{I}_{log}(a,b,c) &= \frac{\pd}{\pd a}\int_0^\infty x^{a-1}e^{-\frac{b}{x}-cx}\md x = \frac{\pd}{\pd a}\left(2\left(\frac{b}{c}\right)^{\frac{a}{2}}K_{a}\left(2\sqrt{bc}\right)\right)\\
&= 2\left(\frac{b}{c}\right)^{\frac{a}{2}}\left(\log\left(\frac{b}{c}\right)K_{a}\left(2\sqrt{bc}\right) + \frac{\pd}{\pd a}K_{a}\left(2\sqrt{bc}\right)\right).
\end{align*}
The derivative of $K_{a}(2\sqrt{bc})$ with respect to $a$ can be expressed using infinite sums of gamma- and polygamma functions; however, in this case it is easier to numerically approximate the derivative using for example forward differences:
\begin{equation*}
\frac{\pd}{\pd a}K_{a}\left(2\sqrt{bc}\right) \approx \frac{K_{a+\epsilon}\left(2\sqrt{bc}\right) - K_{a}\left(2\sqrt{bc}\right)}{\epsilon}.
\end{equation*}
Using this expression, we approximate $\pE(\log(\Gamma_i)|\star)$ as
\begin{align*}
\pE(\log(\mv{\Gamma}_i)|\star) = & \frac{\mathcal{I}_{log}\left(\tau\mv{a}_i-\frac{1}{2},\frac{(\mv{\Lambda}_i-\bar{\gamma} \mv{a}_i)^2}{2\sigma^2},1+\frac{\mu^2}{2\sigma^2}\right)}{\mathcal{I}\left(\tau\mv{a}_i-\frac{1}{2},\frac{(\mv{\Lambda}_i-\bar{\gamma} \mv{a}_i)^2}{2\sigma^2},1+\frac{\mu^2}{2\sigma^2}\right)}
\\
\approx & \log\left(\frac{|\mv{\Lambda}_i-\bar{\gamma} \mv{a}_i|}{\sqrt{\mu^2+2\sigma^2}}\right) -\frac{1}{\epsilon} \\ &+  \frac{1}{\epsilon}\frac{K_{\tau\mv{a}_i-\frac{1}{2}+\epsilon}\left(\sigma^{-2}|\mv{\Lambda}_i-\bar{\gamma} \mv{a}_i|\sqrt{2\sigma^2+\mu^2}\right)}{K_{\tau\mv{a}_i-\frac{1}{2}}\left(\sigma^{-2}|\mv{\Lambda}_i-\bar{\gamma} \mv{a}_i|\sqrt{2\sigma^2+\mu^2}\right)}.
\end{align*}

To obtain the updating equations for the parameters, the loss-function should be maximized with respect to each parameter, for example by differentiating it with respect to the parameters and setting the derivatives equal to zero. Since the system of equations obtained from this procedure is not analytically solvable, one would have to iterate numerically in each step to obtain the parameter updates if the EM algorithm is used without modifications. A better alternative is to use an Expectation Conditional Maximization (ECM) algorithm \citep{mengrubin93} where the M-step is divided into two conditional maximization steps. In the first step, the parameters of the Laplace noise is updated conditionally on the current value of $\kappa$, and in the second step $\kappa$ is updated conditionally on the other parameters. Differentiating the loss-function with respect to $\mu$, $\bar{\gamma}$, and $\sigma$ and setting the derivatives equal to zero yields the following updating rules
\begin{align*}
\mu^{(j+1)} &= \frac{(\mv{\Lambda}^{\trsp}\mv{1})(\mv{a}^{\trsp}\mv{D}_{\pE(\mv{\Gamma}^{-1}|\star)}\mv{a})-(\mv{a}^{\trsp}\mv{1})(\mv{\Lambda}^{\trsp}\mv{D}_{\pE(\mv{\Gamma}^{-1}|\star)}\mv{a})}{(\mv{1}^{\trsp}\pE(\mv{\Gamma}|\star))(\mv{a}^{\trsp}\mv{D}_{\pE(\mv{\Gamma}^{-1}|\star)}\mv{a})-(\mv{1}^{\trsp}\mv{a})^2}, \\
\bar{\gamma}^{(j+1)} &= \frac{(\mv{1}^{\trsp}\pE(\mv{\Gamma}|\star))(\mv{\Lambda}^{\trsp}\mv{D}_{\pE(\mv{\Gamma}^{-1}|\star)}\mv{a})-(\mv{\Lambda}^{\trsp}\mv{1})(\mv{a}^{\trsp}\mv{1})}{(\mv{1}^{\trsp}\pE(\mv{\Gamma}|\star))(\mv{a}^{\trsp}\mv{D}_{\pE(\mv{\Gamma}^{-1}|\star)}\mv{a})-(\mv{1}^{\trsp}\mv{a})^2}, \\ 
\sigma^{(j+1)} &= \frac{1}{\sqrt{n}}\left(\mv{\Lambda}^{\trsp}\mv{D}_{\pE(\mv{\Gamma}^{-1}|\star)}\mv{\Lambda}+2\frac{(\mv{\Lambda}^{\trsp}\mv{D}_{\pE(\mv{\Gamma}^{-1}|\star)}\mv{a})(\mv{\Lambda}^{\trsp}\mv{1})(\mv{1}^{\trsp}\mv{a})}{(\mv{1}^{\trsp}\pE(\mv{\Gamma}|\star))(\mv{a}^{\trsp}\mv{D}_{\pE(\mv{\Gamma}^{-1}|\star)}\mv{a})-(\mv{1}^{\trsp}\mv{a})^2}\right.\\
&\left.
-\frac{(\mv{\Lambda}^{\trsp}\mv{D}_{\pE(\mv{\Gamma}^{-1}|\star)}\mv{a})^2 (\mv{1}^{\trsp}\pE(\mv{\Gamma}|\star)) + (\mv{a}^{\trsp}\mv{D}_{\pE(\mv{\Gamma}^{-1}|\star)}\mv{a})(\mv{\Lambda}^{\trsp}\mv{1})^2}{(\mv{1}^{\trsp}\pE(\mv{\Gamma}|\star))(\mv{a}^{\trsp}\mv{D}_{\pE(\mv{\Gamma}^{-1}|\star)}\mv{a})-(\mv{1}^{\trsp}\mv{a})^2}
\right)^{\frac{1}{2}}.
\end{align*}
In general, there is no closed form expression for the conditional updating equation for $\tau$, so the following equation is maximized numerically to obtain $\tau^{(j+1)}$ 
\begin{equation*}
\mathcal{Q}_{\tau}= \sum_{i=1}^n\left(\tau\mv{a}_i\pE(\log\mv{\Gamma}_i|\star) -\log\Gamma(\tau\mv{a}_i)\right).
\end{equation*}
In the special case when all $\mv{a}_i$ are equal to some value $a$, which for example is the case if a triangulation induced by a regular lattice is used in the Hilbert space approximation, the solution can be written as
\begin{equation*}
\tau^{(j+1)} = \frac{1}{a}\psi^{-1}\left(\frac{1}{n}\sum_{i=1}^n \pE(\log\mv{\Gamma}_i|\star)\right),
\end{equation*}
where $\psi^{-1}(\cdot)$ is the inverse of the digamma function.
Finally $\kappa$ is updated conditionally on the other parameters.
There is no closed form expression for the updating equation for $\kappa$ either, so the following expression is maximized numerically with respect to $\kappa$,
\begin{align*}
\mathcal{Q}_\kappa =& \log(|\mv{K}_{\alpha}|)- \frac{1}{2(\sigma^{(i+1)})^2}\left(\mv{\Lambda}^{\trsp}\mv{D}_{\pE(\Gamma^{-1}|\star)}\mv{\Lambda} \right. \\ &- \left. 2\bar{\gamma}^{(j+1)}\mv{\Lambda}^{\trsp}\mv{D}_{\pE(\Gamma^{-1}|\star)}\mv{a}- 2\mu^{(j+1)} \mv{\Lambda}^{\trsp}\mv{1}\right).
\end{align*}
By the construction of $\mv{K}_{\alpha}$, its log-determinant can be written as
\begin{equation*}
\log(|\mv{K}_{\alpha}|) = \frac{\alpha}{2}\log|\mv{C}^{-1}\mv{G}+\kappa^2\mv{I}| = \frac{\alpha}{2}\sum_{i=1}^n \log(\lambda_i + \kappa^2),
\end{equation*}
where $\lambda_i$ denotes the $i$th eigenvalue of $\mv{C}^{-1}\mv{G}$. If the size of $\mv{K}_{\alpha}$ is small, these eigenvalues can be pre-calculated as they do not depend on the parameters. For larger problems is it most efficient to calculate the log-determinant in each iteration using a sparse Cholesky factorization of $\mv{K}= \mv{G}+\kappa^2\mv{C}$. 

As shown by \cite{mengrubin93}, the ECM algorithm has the same convergence properties as the ordinary EM algorithm. The likelihood is increasing for each iteration and the convergence is linear. Hence, we do not lose any rate of convergence by using the ECM algorithm instead of the EM algorithm.

\section{A simulation study}\label{paperD:sec:study}

\begin{figure}[t]
\begin{center}
\small
\resizebox{0.5\linewidth}{!}{
%
%
\begin{psfrags}%
\psfragscanon%
%
\psfrag{s08}[b][b]{\setlength{\tabcolsep}{0pt}\begin{tabular}{c}a\end{tabular}}%
\psfrag{s10}[][]{\setlength{\tabcolsep}{0pt}\begin{tabular}{c} \end{tabular}}%
\psfrag{s11}[][]{\setlength{\tabcolsep}{0pt}\begin{tabular}{c} \end{tabular}}%
\psfrag{s12}[l][l]{E}%
\psfrag{s13}[l][l]{A}%
\psfrag{s14}[l][l]{C}%
\psfrag{s15}[l][l]{E}%
%
\psfrag{x01}[t][t]{-2}%
\psfrag{x02}[t][t]{0}%
\psfrag{x03}[t][t]{2}%
%
\psfrag{v01}[r][r]{0}%
\psfrag{v02}[r][r]{0.5}%
\psfrag{v03}[r][r]{1}%
\psfrag{v04}[r][r]{1.5}%
\psfrag{v05}[r][r]{2}%
%
\resizebox{6cm}{!}{\includegraphics{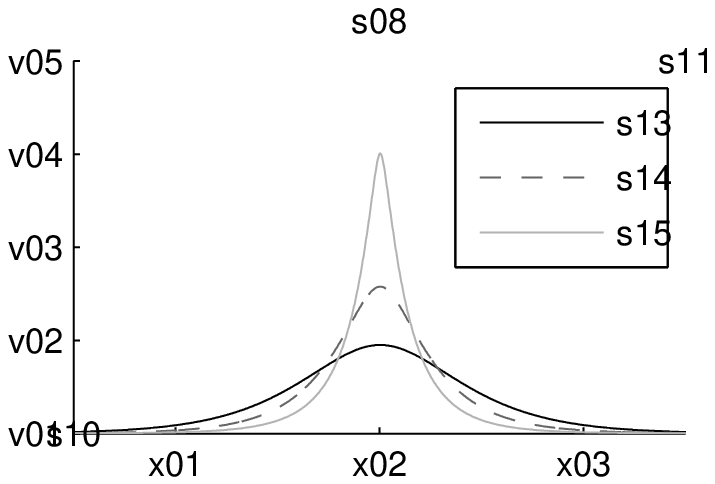}}%
\end{psfrags}%
%
}%
\resizebox{0.5\linewidth}{!}{
%
%
\begin{psfrags}%
\psfragscanon%
%
\psfrag{s08}[b][b]{\setlength{\tabcolsep}{0pt}\begin{tabular}{c}b\end{tabular}}%
\psfrag{s10}[][]{\setlength{\tabcolsep}{0pt}\begin{tabular}{c} \end{tabular}}%
\psfrag{s11}[][]{\setlength{\tabcolsep}{0pt}\begin{tabular}{c} \end{tabular}}%
\psfrag{s12}[l][l]{F}%
\psfrag{s13}[l][l]{B}%
\psfrag{s14}[l][l]{D}%
\psfrag{s15}[l][l]{F}%
%
\psfrag{x01}[t][t]{-2}%
\psfrag{x02}[t][t]{0}%
\psfrag{x03}[t][t]{2}%
\psfrag{x04}[t][t]{4}%
%
\psfrag{v01}[r][r]{0}%
\psfrag{v02}[r][r]{0.5}%
\psfrag{v03}[r][r]{1}%
\psfrag{v04}[r][r]{1.5}%
%
\resizebox{6cm}{!}{\includegraphics{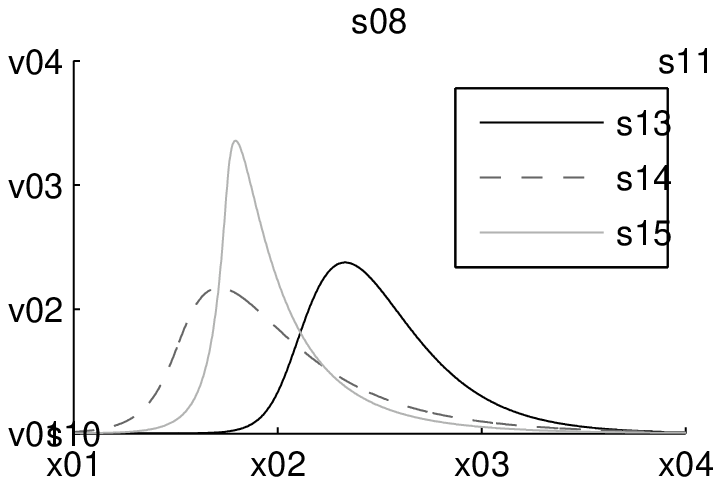}}%
\end{psfrags}%
%
}\\%
\resizebox{0.5\linewidth}{!}{
%
%
\begin{psfrags}%
\psfragscanon%
%
\psfrag{s08}[b][b]{\setlength{\tabcolsep}{0pt}\begin{tabular}{c}c\end{tabular}}%
\psfrag{s10}[][]{\setlength{\tabcolsep}{0pt}\begin{tabular}{c} \end{tabular}}%
\psfrag{s11}[][]{\setlength{\tabcolsep}{0pt}\begin{tabular}{c} \end{tabular}}%
\psfrag{s12}[l][l]{K}%
\psfrag{s13}[l][l]{G}%
\psfrag{s14}[l][l]{I}%
\psfrag{s15}[l][l]{K}%
%
\psfrag{x01}[t][t]{-2}%
\psfrag{x02}[t][t]{0}%
\psfrag{x03}[t][t]{2}%
%
\psfrag{v01}[r][r]{0}%
\psfrag{v02}[r][r]{0.5}%
\psfrag{v03}[r][r]{1}%
\psfrag{v04}[r][r]{1.5}%
%
\resizebox{6cm}{!}{\includegraphics{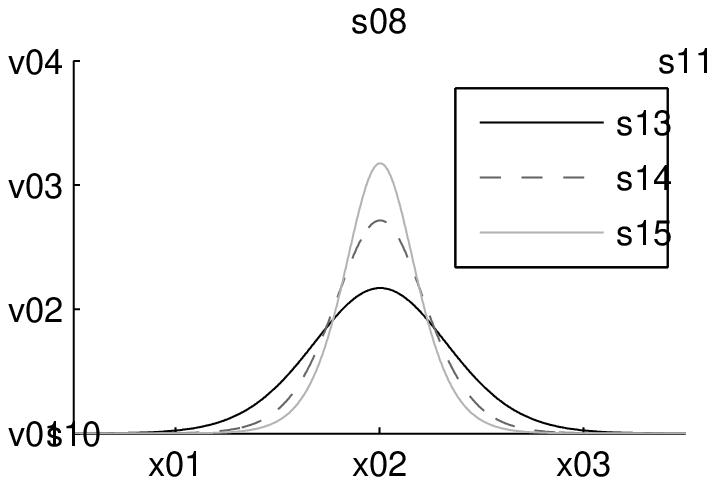}}%
\end{psfrags}%
%
}%
\resizebox{0.5\linewidth}{!}{
%
%
\begin{psfrags}%
\psfragscanon%
%
\psfrag{s08}[b][b]{\setlength{\tabcolsep}{0pt}\begin{tabular}{c}d\end{tabular}}%
\psfrag{s10}[][]{\setlength{\tabcolsep}{0pt}\begin{tabular}{c} \end{tabular}}%
\psfrag{s11}[][]{\setlength{\tabcolsep}{0pt}\begin{tabular}{c} \end{tabular}}%
\psfrag{s12}[l][l]{L}%
\psfrag{s13}[l][l]{H}%
\psfrag{s14}[l][l]{J}%
\psfrag{s15}[l][l]{L}%
%
\psfrag{x01}[t][t]{-2}%
\psfrag{x02}[t][t]{0}%
\psfrag{x03}[t][t]{2}%
\psfrag{x04}[t][t]{4}%
%
\psfrag{v01}[r][r]{0}%
\psfrag{v02}[r][r]{0.5}%
\psfrag{v03}[r][r]{1}%
\psfrag{v04}[r][r]{1.5}%
\psfrag{v05}[r][r]{2}%
%
\resizebox{6cm}{!}{\includegraphics{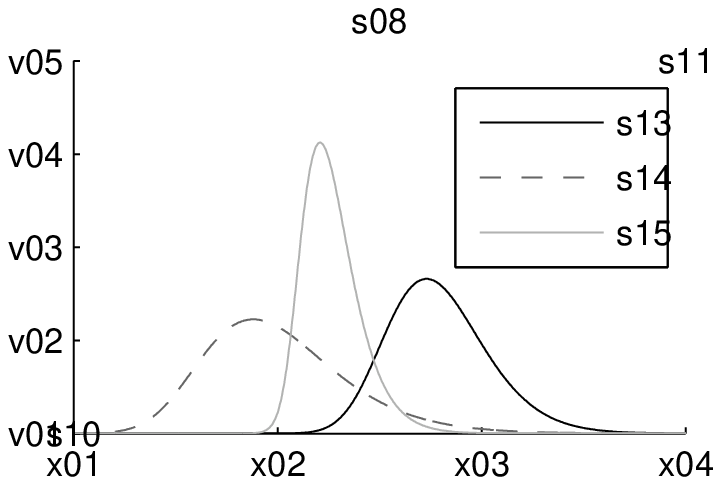}}%
\end{psfrags}%
%
}%
\end{center}
\vspace{-0.5cm}
\caption{Marginal distributions for the twelve test cases in the simulation study. In Panel a and Panel b, the approximate covariance range is $3.5$ and in Panel c and d, the range is $35$.}
\label{paperD:fig:params}
\end{figure}

\begin{table}[t]\centering\scriptsize
\begin{tabular}{|@{\hspace{2pt}}c@{\hspace{2pt}}
|@{\hspace{1pt}}c@{}r@{.}l@{ }r@{.}l@{ }r@{.}l@{}
|@{\hspace{1pt}}c@{}r@{.}l@{ }r@{.}l@{ }r@{.}l@{}
|@{\hspace{1pt}}c@{}r@{.}l@{ }r@{.}l@{ }r@{.}l@{}
|@{\hspace{1pt}}c@{}r@{.}l@{ }r@{.}l@{ }r@{.}l@{}
|@{\hspace{1pt}}r@{}r@{.}l@{ }r@{.}l@{ }r@{.}l@{}|}
\hline
& \multicolumn{7}{c}{$\kappa$} 
& \multicolumn{7}{c}{$\tau$} 
& \multicolumn{7}{c}{$\sigma$} 
& \multicolumn{7}{c}{$\mu$} 
& \multicolumn{7}{c|}{$\gamma$}\\
\hline
A 
&  1 &  (0&95 & 1&00 & 1&06)
&  2 &  (1&63 & 2&03 & 3&02)
&  1 &  (0&78 & 0&98 & 1&13)  
&  0 &  (-0&07 & -0&01 & 0&06)
&  0 &  (-0&06 & 0&00 & 0&07) \\[1ex]

B 
&  1 &  (0&96 & 1&00 & 1&05)  
&  2 &  (1&68 & 1&99 & 2&41)
& $\frac{1}{2}$ & (0&42 & 0&50 & 0&57)
& $\frac{1}{2}$ & (0&43 & 0&50 & 0&57)
&  0 & (-0&06 & 0&00 & 0&07)\\[1ex]

C 
&  1 &  (0&96 & 1&00 & 1&05)  
&  1 &  (0&85 & 0&99 & 1&21)
&  1 &  (0&87 & 1&00 & 1&11)  
&  0 &  (-0&07 & 0&00 & 0&05)
&  0 &  (-0&04 & 0&00 & 0&05) \\[1ex]

D 
&   1 &  (0&96 & 1&00 & 1&04)  
&  1 &  (0&90 & 1&00 & 1&14)  
&  1 &  (0&90 & 1&00 & 1&10)  
&  1 &  (0&87 & 1&00 & 1&13)
&  -1 &  (-1&11 & -1&00 & -0&89)\\[1ex]

E 
&   1 &  (0&97 & 1&00 & 1&02)  
& $\frac{1}{2}$ &  (0&45 & 0&49 & 0&54)  
&  1 &  (0&93 & 1&00 & 1&08)  
&  0 &  (-0&06 & 0&00 & 0&06)
&  0 &  (-0&01 & 0&00 & 0&01)\\[1ex]

F 
&   1 &  (0&98 & 1&00 & 1&01)  
& $\frac{1}{2}$ &  (0&46 & 0&50 & 0&54)  
&  1 &  (0&91 & 1&00 & 1&08)  
&  1 &  (0&89 & 1&00 & 1&11)
&  -1 &  (-1&09 & -1&01 & -0&92)\\[1ex]

G 
& $\frac{1}{10}$ &  (0&09 & 0&10 & 0&11)
&  1 &  (0&86 & 1&00 & 1&24)  
&  1 &  (0&87 & 1&00 & 1&11)  
&  0 &  (-0&06 & 0&00 & 0&06)
&  0 &  (-0&04 & 0&00 & 0&04)\\[1ex]

H 
&  $\frac{1}{10}$ &  (0&09 & 0&10 & 0&11)
&  1 &  (0&89 & 0&99 & 1&13)  
& $\frac{1}{2}$ &  (0&45 & 0&50 & 0&54)
& $\frac{1}{2}$ &  (0&44 & 0&50 & 0&56)
&  0 &  (-0&05 & 0&01 & 0&13)\\[1ex]

I 
&  $\frac{1}{10}$ &  (0&10 & 0&10 & 0&10)
& $\frac{1}{2}$ &  (0&45 & 0&49 & 0&54)  
&  1 &  (0&91 & 1&01 & 1&09)  
&  0 &  (-0&06 & 0&00 & 0&07)
&  0 &  (-0&01 & 0&00 & 0&01)\\[1ex]

J 
&  $\frac{1}{10}$ &  (0&10 & 0&10 & 0&10)
& $\frac{1}{2}$ &  (0&46 & 0&50 & 0&54)  
&  1 &  (0&91 & 0&99 & 1&08)  
&  1 &  (0&90 & 1&01 & 1&13)
&  -1 &  (-1&09 & -1&01 & -0&93)\\[1ex]

K 
&  $\frac{1}{10}$ &  (0&10 & 0&10 & 0&10)
& $\frac{1}{3}$ &   (0&31 & 0&33 & 0&36) 
&  1 &  (0&91 & 0&99 & 1&06) 
&  0 &  (-0&07 & 0&00 & 0&07)
&  0 &  (-0&00 & 0&00 & 0&00)\\[1ex]

L 
&  $\frac{1}{10}$ &  (0&09 & 0&10 & 0&12)
& $\frac{1}{3}$ &  (0&33 & 0&36 & 0&43)  
& $\frac{1}{2}$ &  (0&42  & 0&47 & 0&51) 
& $\frac{1}{2}$ &  (0&39 & 0&50 & 0&52)
&  0 &  (-0&05 & 0&00 & 0&13)\\[1ex]
\hline
\end{tabular}
\caption{Parameter settings for the twelve cases shown in Figure \ref{paperD:fig:params} the estimation procedure is tested for. In the parentheses, the $10\%$ $50\%$, and $90\%$ percentiles of 500 Monte Carlo samples are shown. Note that most estimates seem to be unbiased, perhaps with the exception of the estimates of $\tau$ in case L.}
\label{paperD:tab1}
\end{table}

In this section, a simulation study is performed to test the accuracy of the parameter estimation algorithm presented above. The algorithm is tested for twelve different parameter settings corresponding to marginal distributions shown in Figure \ref{paperD:fig:params} for processes in one dimension with $\alpha=2$. For Mat\'{e}rn covariance functions, one sometimes defines the approximate range as $r = \sqrt{8\nu}\kappa^{-1}$, which is the value where the correlation is approximately $0.1$. For the first six test cases, we have $\kappa=1$ which corresponds to an approximate range of $3.5$, and for the last six cases we have $\kappa=0.1$ which corresponds to an approximate range of $35$. For each value of $\kappa$, three symmetric distributions and three asymmetric distributions are used. In Figure \ref{paperD:fig:params}, the distributions for the short range are shown in the two upper panels, and the distributions for the long range are shown in the two bottom panels. 

For each set of parameters, $500$ data sets are simulated using Algorithm \ref{paperD:alg:sampel}, where each data set contains $1000$ equally spaced observations on $[1, 1000]$. The basis used in the Hilbert space approximations consists of $1000$ piecewise linear basis functions centered at $1, 2, \ldots, 1000$. For each data set, the starting value for $\kappa$ is set to $\sqrt{8\nu}\hat{r}^{-1}$, where $\hat{r}$ is the approximate range for the empirical covariance function for the data set. To obtain good starting values for the other parameters, an initial run of the EM estimator is made with $\kappa$ fixed to the starting value and where the starting values for $\mu$ and $\gamma$ are drawn independently from a $\pN(0,1)$ distribution, and the starting values for $\sigma$ and $\tau^{-1}$ are drawn from a $\chi^2(1)$-distribution. After $100$ steps, this initial run is ended, and the estimates are used as starting values for the full EM-estimator.

In Table~\ref{paperD:tab1}, the $10\%$ $50\%$, and $90\%$ percentiles of 500 Monte Carlo samples are shown for each parameter setting, together with the true values of the parameters. One can note that all estimates are more or less unbiased and have fairly small variances, indicating that the estimation procedure works as intended. The only case where the estimator seems to have a bias is in case L, where most of the estimated values of $\tau$ are above the true value. The cause of this bias is probably that the estimation procedure is not very stable for small values of $\tau$ because some of the expectations $\pE(\mv{\Gamma}_i^{-1}|\star)$ can be infinite in this case. More precisely, for $\tau<3/(2\min(\mv{a}_1,\ldots,\mv{a}_n))$, the likelihood is unbounded for any $\bar{\gamma} = \mv{\Lambda}_i/\mv{a}_i$ and the ML procedure thus has to be modified. To improve the stability of the algorithm, the expectations $\pE(\mv{\Gamma}_i^{-1}|\star)$ are truncated to $1000$ in the first iteration, and for each iteration this bound is made larger so that it has little to no effect after a few hundred iterations of the algorithm. This greatly improves the stability for $\tau<3/(2\min(\mv{a}_1,\ldots,\mv{a}_n))$, but it is left for future research to justify this modified maximum likelihood procedure theoretically, to derive large sample properties of the estimator, and to investigate other improvements for the case of small values of $\tau$.

It should finally be noted that the parameters are estimated assuming the same finite element approximation as is used for simulating the data. Estimating the model parameters using a different numbers of basis functions in the approximation can possibly give biased estimates, as the parameters are estimated to maximize the likelihood for the approximate model instead of the exact SPDE. The size of this bias depends on the specific parameters of the model, and especially on the true covariance range in relation to the spacing of the basis functions, as discussed in \cite{bolin09b} in the case of Gaussian models. It is, however, outside the scope of this work to investigate this issue further here. 

\section{Discussion and extensions}\label{paperD:sec:discussion}
We have showed how the SPDE approach by \cite{lindgren10} can be extended to the case of Laplace noise and how this can be used to obtain an efficient estimation procedure as well as an accurate estimation technique for the Laplace moving average models. This is indented as a demonstration that the methods in \cite{lindgren10} are applicable to more general situations than the ordinary Gaussian models. There are also a number of extensions that can be made to this work which are discussed below. 

First of all, the Hilbert space approximation technique in Section \ref{paperD:sec:hilbert} was derived using theory for L\'{e}vy processes of type G, and although we only used this for the case of Laplace noise, the methods work equally well for this larger class of models. All that is changed are the distributions of the integrals conditionally on the variance process. These techniques are also applicable to the case when more general SPDEs are used, one could for example use the nested SPDEs by \cite{bolin09b} to achieve more general covariance structures without any additional work needed, or one could include drift terms in the operator on the left-hand side to mimic the effects of asymmetric kernels in the Laplace moving average models. The methods are in fact not restricted to $\R^d$ or stationary SPDEs, but can be extended to non-stationary SPDEs on general Riemann manifolds.

Secondly, the estimation procedure in Section \ref{paperD:sec:estimation} assumed that one basis function was used for each observation of the process. The reason being that this gives us a one-to-one correspondence between the observations and the Laplace variables $\mv{\Lambda}$ which simplified the estimation procedure. For practical applications this is not ideal as one would like to be able to choose the basis independently of the measurement locations, and it would also be useful if one could assume that the measurements are taken under measurement noise. If the estimation procedure could be extended to handle these cases, the practical usefulness of these models would greatly improve. 

As mentioned in Section \ref{paperD:sec:study}, the estimation procedure is sensitive to the value of $\tau$. Too large values will result in a model which is very similar to a standard Gaussian model, and it might be difficult to accurately estimate the parameters in this case without a very large data set. This is not a big problem as if the data is Gaussian, one should not use these models but a standard Gaussian model. The estimation procedure is also unstable for small values of $\tau$, and modifications to further improve the stability in this case are currently being investigated.
   
\section*{Acknowledgements}   
The author is grateful to Krzysztof Podg\'orski for many helpful comments and discussions regarding the theoretical aspects of this work, to Daniel Simpson for providing some of his code for the Krylov subspace methods used in Section \ref{paperD:sec:sampel}, and to Jonas Wallin for numerous discussions regarding the parameter estimation problem and for suggesting the truncation of the expectations mentioned at the end of Section \ref{paperD:sec:study}.

\bibliographystyle{imsart-nameyear}
\bibliography{Journals_abrv,completebib}

\end{document}